\definecolor{myurlcolor}{rgb}{0,0,0.7}
\definecolor{myrefcolor}{rgb}{0.8,0,0}
\def\appendix{Appendix\xspace}
\begin{document}

\theoremstyle{plain}
\newtheorem{theorem}{Theorem}
\newtheorem{lemma}[theorem]{Lemma}
\newtheorem{lemm}{Lemma}
\numberwithin{lemm}{section}
\newtheorem{corollary}[theorem]{Corollary}
\newtheorem{conjecture}[theorem]{Conjecture}
\newtheorem{proposition}[theorem]{Proposition}
\newtheorem{corol}[lemm]{Corollary}
\newcommand{\bs}{\boldsymbol}
 \newcommand{\C}{\mathbb{C}}
  \newcommand{\F}{\mathbb{F}}
  \newcommand{\N}{\mathbb{N}}
  \renewcommand{\P}{\mathbb{P}}
  \newcommand{\R}{\mathbb{R}}
  \newcommand{\Z}{\mathbf{Z}}
  \renewcommand{\a}{\mathbf{a}}
  \renewcommand{\b}{\mathbf{b}}
  \renewcommand{\i}{\mathbf{i}}
  \renewcommand{\j}{\mathbf{j}}
  \renewcommand{\c}{\mathbf{c}}
  \newcommand{\e}{\mathbf{e}}
  \newcommand{\f}{\mathbf{f}}
  \newcommand{\g}{\mathbf{g}}
  \newcommand{\gl}{\mathbf{GL}}
  \newcommand{\m}{\mathbf{m}}
  \newcommand{\n}{\mathbf{n}}
  \newcommand{\bNP}{\mathbf{NP}}
  \newcommand{\bNPC}{\mathbf{NPC}}
  \newcommand{\p}{\mathbf{p}}
  \newcommand{\bP}{\mathbb{P}}
  \newcommand{\bPo}{\mathbf{Po}}
  \newcommand{\q}{\mathbf{q}}
  \newcommand{\s}{\mathbf{s}}
  \newcommand{\bt}{\mathbf{t}}
  \newcommand{\T}{\mathbf{T}}
  \newcommand{\U}{\mathbf{U}}
  \renewcommand{\u}{\mathbf{u}}
  \renewcommand{\v}{\mathbf{v}}
  \newcommand{\V}{\mathbf{V}}
  \newcommand{\w}{\mathbf{w}}
  \newcommand{\W}{\mathbf{W}}
  \newcommand{\x}{\mathbf{x}}
  \newcommand{\X}{\mathbf{X}}
  \newcommand{\y}{\mathbf{y}}
  \newcommand{\Y}{\mathbf{Y}}
  \newcommand{\z}{\mathbf{z}}
  \newcommand{\0}{\mathbf{0}}
  \newcommand{\1}{\mathbf{1}}
  \newcommand{\Gam}{\mathbf{\Gamma}}
  \newcommand{\bGamma}{\Gam}
  \newcommand{\Lam}{\mathbf{\Lambda}}
  \newcommand{\lam}{\mbox{\boldmath{$\lambda$}}}
    \newcommand{\om}{\mbox{\boldmath{$\omega$}}}
  \newcommand{\bA}{\mathbf{A}}
  \newcommand{\bB}{\mathbf{B}}
  \newcommand{\bC}{\mathbf{C}}
  \newcommand{\bH}{\mathbf{H}}
  \newcommand{\bL}{\mathbf{L}}
  \newcommand{\bM}{\mathbf{M}}
  \newcommand{\bc}{\mathbf{c}}
  \newcommand{\cA}{\mathcal{A}}
  \newcommand{\cB}{\mathcal{B}}
  \newcommand{\cC}{\mathcal{C}}
  \newcommand{\cD}{\mathcal{D}}
  \newcommand{\cE}{\mathcal{E}}
  \newcommand{\cF}{\mathcal{F}}
  \newcommand{\cG}{\mathcal{G}}
  \newcommand{\cH}{\mathcal{H}}
  \newcommand{\cI}{\mathcal{I}}
  \newcommand{\cJ}{\mathcal{J}}
  \newcommand{\cL}{\mathcal{L}}
  \newcommand{\cM}{\mathcal{E}}
  \newcommand{\cMm}{\mathcal{M}}
    \newcommand{\cN}{\mathcal{N}}
  \newcommand{\cO}{\mathcal{O}}
  \newcommand{\cP}{\mathcal{P}}
  \newcommand{\cQ}{\mathcal{Q}}
  \newcommand{\cR}{\mathcal{R}}
  \newcommand{\br}{\mathbf{r}}
  \newcommand{\cS}{\mathcal{S}}
  \newcommand{\cT}{\mathcal{T}}
  \newcommand{\cU}{\mathcal{U}}
  \newcommand{\cV}{\mathcal{V}}
  \newcommand{\cW}{\mathcal{W}}
  \newcommand{\cX}{\mathcal{X}}
  \newcommand{\cY}{\mathcal{Y}}
  \newcommand{\cZ}{\mathcal{Z}}
  \newcommand{\rE}{\mathrm{E}}
  \newcommand{\rH}{\mathrm{H}}
  \newcommand{\rU}{\mathrm{U}}
  \newcommand{\mfU}{\mathfrak{U}}
  \newcommand{\Cp}{\mathrm{Cap\;}}
  \newcommand{\lan}{\langle}
  \newcommand{\ran}{\rangle}
  \newcommand{\an}[1]{\lan#1\ran}
  \def\diag{\mathop{{\rm diag}}\nolimits}
  \newcommand{\hs}{\hspace*{\parindent}}
  \newcommand{\cl}{\mathop{\mathrm{Cl}}\nolimits}
  \newcommand{\tr}{\mathop{\mathrm{Tr}}\nolimits}
  \newcommand{\Aut}{\mathop{\mathrm{Aut}}\nolimits}
  \newcommand{\argmax}{\mathop{\mathrm{arg\,max}}}
  \newcommand{\Eig}{\mathop{\mathrm{Eig}}\nolimits}
  \newcommand{\Gr}{\mathop{\mathrm{Gr}}\nolimits}
  \newcommand{\Fr}{\mathop{\mathrm{Fr}}\nolimits}
  \newcommand{\trans}{^\top}
  \newcommand{\opt}{\mathop{\mathrm{opt}}\nolimits}
  \newcommand{\per}{\mathop{\mathrm{perm}}\nolimits}
  \newcommand{\haff}{\mathrm{haf\;}}
  \newcommand{\perio}{\mathrm{per}}
  \newcommand{\conv}{\mathrm{conv\;}}
  \newcommand{\Cov}{\mathrm{Cov}}
  \newcommand{\inter}{\mathrm{int}}
  \newcommand{\dist}{\mathrm{dist}}
  \newcommand{\inn}{\mathrm{in}}
  \newcommand{\grank}{\mathrm{grank}}
  \newcommand{\mrank}{\mathrm{mrank}}
  \newcommand{\krank}{\mathrm{krank}}
  \newcommand{\out}{\mathrm{out}}
  \newcommand{\orient}{\mathrm{orient}}
  \newcommand{\Pu}{\mathrm{Pu}}
  \newcommand{\rdc}{\mathrm{rdc}}
  \newcommand{\range}{\mathrm{range\;}}
  \newcommand{\Sing}{\mathrm{Sing\;}}
  \newcommand{\topo}{\mathrm{top}}
  \newcommand{\undir}{\mathrm{undir}}
  \newcommand{\Var}{\mathrm{Var}}
  \newcommand{\rC}{\mathrm{C}}
  \newcommand{\rF}{\mathrm{F}}
  \newcommand{\rL}{\mathrm{L}}
  \newcommand{\rM}{\mathrm{M}}
  \newcommand{\rO}{\mathrm{O}}
  \newcommand{\rR}{\mathrm{R}}
  \newcommand{\rS}{\mathrm{S}}
  \newcommand{\rT}{\mathrm{T}}
  \newcommand{\pr}{\mathrm{pr}}
  \newcommand{\inte}{\mathrm{int}}
  \newcommand{\inv}{\mathrm{inv}}
  \newcommand{\pers}{\per_s}
  \newcommand{\del}{\boldsymbol{\delta}}
  \renewcommand{\alph}{\boldsymbol{\alpha}}
  \newcommand{\bet}{\boldsymbol{\beta}}
  \newcommand{\gam}{\boldsymbol{\gamma}}
  \newcommand{\sig}{\boldsymbol{\sigma}}
  \newcommand{\zet}{\boldsymbol{\zeta}}
  \newcommand{\et}{\boldsymbol{\eta}}
  \newcommand{\xit}{\boldsymbol{\xi}}
  \newcommand{\perm}{\mathrm{perm\;}}
  \newcommand{\adj}{\mathrm{adj\;}}
  \newcommand{\rank}{\mathrm{rank\;}}
  \newcommand{\set}[1]{\{#1\}}
  \newcommand{\spec}{\mathrm{spec\;}}
  \newcommand{\supp}{\mathrm{supp\;}}
  \newcommand{\Tr}{\mathrm{Tr\;}}
  \newcommand{\vol}{\text{vol}}
  \newcommand{\be}{\begin{equation}}
  \newcommand{\ee}{\end{equation}}
	\newcommand{\ben}{\begin{eqnarray}}
  \newcommand{\een}{\end{eqnarray}}
\newcommand{\<}{\langle}
\renewcommand{\>}{\rangle}
\newcommand*{\ket}[1]{\left|#1\right\rangle}
\newcommand*{\bra}[1]{\left\langle #1\right|}
\newcommand*{\proj}[1]{\ket{#1}\bra{#1}}
\newcommand*{\eins}{\ensuremath{\mathbbm 1}}
\def\ot{\otimes}
\newcommand{\mh}[1]{\textcolor{blue}{\tt michal: #1}}

\newenvironment{rcases}
  {\left.\begin{aligned}}
  {\end{aligned}\right\rbrace}

	\newcommand{\nd}{\textendash}
  \newcommand{\md}{\textemdash}

\theoremstyle{definition}
\newtheorem{definition}{Definition}

\theoremstyle{remark}
\newtheorem*{remark}{Remark}
\newtheorem{example}{Example}

\title{The Conditional Uncertainty Principle}

\author{Gilad Gour}
\affiliation{Department of Mathematics and Statistics and Institute for Quantum Science and Technology, University of Calgary, 2500 University Drive NW, Calgary, Alberta, Canada T2N 1N4}

\author{Andrzej Grudka}
\affiliation{Faculty of Physics, Adam Mickiewicz University, 61-614 Pozna\'{n}, Poland}

\author{Micha\l\ Horodecki}
\affiliation{Institute for Theoretical Physics and Astrophysics,
University of Gda{\'n}sk, 80-952 Gda{\'n}sk, Poland}
\affiliation{National Quantum Information Centre of Gda\'{n}sk, 81-824 Sopot, Poland}

\author{Waldemar K\l{}obus}
\affiliation{Faculty of Physics, Adam Mickiewicz University, 61-614 Pozna\'{n}, Poland}

\author{Justyna \L{}odyga}
\affiliation{Faculty of Physics, Adam Mickiewicz University, 61-614 Pozna\'{n}, Poland}

\author{Varun Narasimhachar}
\affiliation{Department of Mathematics and Statistics and Institute for Quantum Science and Technology, University of Calgary, 2500 University Drive NW, Calgary, Alberta, Canada T2N 1N4}
\affiliation{School of Physical and Mathematical Sciences and Complexity Institute, Nanyang Technological University, 50 Nanyang Ave, Singapore 639798}

\date{$\pi$ Day, 2018}

\begin{abstract}
We develop a general operational framework that formalizes the concept of conditional uncertainty in a measure-independent fashion. Our formalism is built upon a mathematical relation which we call conditional majorization. We define conditional majorization and, for the case of classical memory, we provide its thorough characterization in terms of monotones, i.e., functions that preserve the partial order under conditional majorization. We demonstrate the application of this framework by deriving two types of memory-assisted uncertainty relations: (1) a monotone-based conditional uncertainty relation, (2) a universal measure-independent conditional uncertainty relation, both of which set a lower bound on the minimal uncertainty that Bob has about Alice's pair of incompatible measurements, conditioned on arbitrary measurement that Bob makes on his own system. We next compare the obtained relations with their existing entropic counterparts and find that they are at least independent.
\end{abstract}

\maketitle 

The discovery of quantum mechanics in the early twentieth century brought about a profound change in the way we view the physical world. Among its most counter-intuitive features is the uncertainty principle, which states that there is a minimum uncertainty inherent in certain pairs of measurements \cite{Heisenberg}. The original uncertainty relations (URs) that quantify this phenomenon are widely known in the form of Robertson's formula which sets a state-dependent lower bound on the standard deviations of two operators as a measure of uncertainty \cite{Robertson1929}. It was soon realized that the content of such URs can be stated using other quantifiers of the indeterminacy of measurements. Beginning with Hirschman's formulation in terms of entropies \cite{Hirschman1957}, many different entropic URs have since been proposed in which the authors utilized the Shannon entropy and other R\'enyi entropies as measures of uncertainty \cite{Beckner1975,BiaynickiBirula1975,Deutsch1983,Wehner2010,Coles2017}. Among them, persuing a conjecture of Kraus \cite{Kraus1987}, one of the most well-known entropic uncertainty relations was derived by Maassen and Uffink \cite{Maassen1988}. It states that
\be
\label{MaassenUffink}
H(A_1)+H(A_2)	\ge\log_2\frac1c, 
\ee
where $H(\cdot)$ is the Shannon entropy of a probability distribution for an observable $A_1$ and $A_2$, and $c=\max_{a_1,a_2} |\<a_1|a_2 \>|$, where $|a_1\rangle$ and $|a_2\rangle$ are the eigenvectors of $A_1$ and $A_2$, respectively.
Nevertheless, the entropy is merely one of many ways to characterize uncertainty, and does not capture its complete operational meaning. 

Quite recently \cite{Partovi2011,Friedland2013,Puchaa2013,DSum,DSC,NPG15}, a measure-independent approach was applied to uncertainty without a memory (``non-conditional uncertainty'') in terms of a mathematical relation of \emph{majorization}. Namely, given two probability distributions $\p$ and $\q$ of some classical variable, the latter is said to be more uncertain than the former only if $\q$ is majorized by $\p$, denoted $\q \prec\p$ \cite{footnote}. Applying this operational framework on pairs (or larger sets) of outcome distributions of quantum measurements, allowed for construction of a majorization UR in the form \cite{Friedland2013,Puchaa2013}:
\be\label{UUR}
\q^{A_1} \ot \q^{A_2} \prec \bs\omega,
\ee
where $\q^{A_1}$ ($\q^{A_2}$) is the probability distribution over the outcomes of measurement of an observable $A_1$ ($A_2$) applied on an arbitrary quantum state $\rho$, and $\bs\omega\neq \left(1,0,...,0\right)$ is some \emph{constant} vector, independent of $\rho$. The left-hand side of \eqref{UUR} is more uncertain than a nontrivial fixed vector $\bs\omega$, which carries some uncertainty. The universality of such a relation lies in the fact that for any measure of uncertainty $\cMm_{\cU}$, e.g., Shannon entropy, one can get a lower bound on the joint uncertainty $\cMm_{\cU}(\q^{A_1} \ot \q^{A_2})$ by evaluating $\cMm_{\cU}$ on the fixed argument $\bs\omega$.

The study of URs is significant due to their applications in cryptographic tasks \cite{Damgaard2005,Koashi2005,DiVincenzo2004,TLGR12}, quantum correlations and nonlocality problems \cite{Oppenheim2010,Guhne2004}, and continuous-variable quantum information  processing \cite{Squeezed,CVQIP1,CVQIP2}. In some of these, it is important to consider the effect of a (sometimes adversarially controlled) \emph{memory} possibly correlated with the system. The straightforward entropic conditional uncertainty relation (CUR) in a presence of classical memory \cite{Hall1995,RenesBoileau09}:
\be
\label{Renes_clas}
H(A_1|\cR)+H(A_2|\cR)	\ge\log_2\frac1c
\ee 
can be derived from Maassen-Uffink UR \eqref{MaassenUffink}. The Shannon entropy is replaced here by the conditional Shannon entropy $H(\cdot|\cdot)$, which is calculated for distributions obtained from bipartite state $\rho^{AB}$ by measuring one of two observables $A_1,A_2$ on part $A$, and applying arbitrary measurement on part $B$, whose outcomes give rise to classical register $\cR$. The extension of inequality \eqref{Renes_clas} to quantum memory case
was conjectured in \cite{RenesBoileau09}, and proved in \cite{Berta2010}. Compared to the smooth-entropy approach of \cite{Berta2010,Tomamichel2011}, a more information-theoretic treatment of CURs \cite{coles11,Coles2012,coles2014,bcw14,KTW14,Coles2017} was developed  leading to conceptual and quantitative improvements.
However, the recent study was still restricted to entropic CURs. The possible generalization of these
(in the spirit of recently discovered majorization-based approach to non-conditional uncertainty) proved to be difficult due to lack of the notion of conditional uncertainty in abstract terms.

In this work, we resolve this problem and show how to formulate conditional uncertainty in a fully operational way, i.e., without referring to particular quantifiers of uncertainty. To this end, we define  a mathematical  relation, which we call \emph{conditional majorization}, that acts as the basis for comparing the conditional uncertainties of variables, analogously to how ordinary majorization for vectors enables the comparison of non-conditional uncertainties \cite{Friedland2013,Puchaa2013,NPG15}. It should be noted that conditional majorization is a new type of majorization relation for matrices that, to our knowledge, has not been considered so far in literature. Still, in a special case of a 1-column matrix it simplifies to standard vector majorization. Next, we apply our characterization of conditional uncertainty to derive two kinds of conditional URs with a memory in a bipartite setting: a so called \emph{monotone-based CUR} and a \emph{measure-independent CUR} {\textemdash} a counterpart to \eqref{Renes_clas}, but not involving any specific measure of conditional uncertainty.

\textit{The notion of (conditional) uncertainty.}--- Let us first outline a general approach to uncertainty relations without a memory (URs) with emphasis on the fundamental questions and challenges that inherently arise from their structure.  For a pair of measurements of observables $A_1$ and $A_2$, the UR states that the uncertainty of measurements outcomes, distributed due to probability vectors $\q^{A_1}$ and $\q^{A_2}$, respectively, is lower bounded by some constant $\omega(A_1,A_2)$. In order to construct a given UR, we then necessarily need to know first, how the uncertainty itself is defined and what class of functions can be used to measure it. Note that given two probability distributions $\p$ and $\q$, any legitimate measure of uncertainty $\cMm_{\cU}$ must naturally obey the following inequality: $\cMm_{\cU}(\q) \geq\cMm_{\cU}(\p)$, whenever $\q$ is more uncertain than $\p$. What then remains is to answer a fundamental question: when one can \emph{unambiguously state} that $\q$ is more uncertain than $\p$? In \citep{Friedland2013}, authors tackled this problem by introducing a partial order  $\p \mapsto \q$ as a tool for comparing the uncertainties of two probability distributions. Namely, with a use of just two plain axioms: i) uncertainty cannot change under permutation, ii) uncertainty cannot decrease under convex mixture (forgetting information), the uncertainty was characterized through a class of ``certainty-nonincreasing" operations, called \textit{random relabelings}, given by a set of doubly stochastic matrices $D$ such that: $\p \mapsto \q= D\p$. This relation can be equivalently expressed in terms of vector majorization denoted as $\q \prec \p$ (read: $\q$ is majorized by $\p$), if $\q$ is more uncertain than $\p$. Accordingly, any legitimate measure of uncertainty $\cMm_{\cU}$, which we call \emph{monotone}, necessarily must be monotonic under a partial order of majorization: $\q \prec \p  \Leftrightarrow \cMm_{\cU}(\q) \geq \cMm_{\cU}(\p)$. One can easily see \cite{footnote} that there is a finite set of monotones $\{\cMm_{\cU}^k \}$, given in terms of consecutive partials sums of vector components, that is sufficient as a condition for vector majorization.

Similarly, when approaching the conditional uncertainty relations in a presence of memory (CURs), we primary need to recognize the notion of conditional uncertainty and determine what functions can be used to quantify it. To this end, we shall introduce a new partial order, which qualitatively and quantitatively reflects the conditional uncertainty. In particular, we need to identify the class of operations that cannot decrease conditional uncertainty.

\textit{Classical conditional majorization.}--- We start with a case of classical memory. Alice holds a system of interest, $X$, while Bob holds another system, $\cR'$, which is the classical memory. Later on, when we apply our formalism to construct various uncertainty relations, Alice's system will result from measuring the incompatible observables on part $A$ of the initially shared quantum system  $\rho^{AB}$, whereas Bob's classical register will arise due to arbitrary chosen measurement performed by him on the quantum subsystem $B$. The classical conditional uncertainty is really a property of emerging post-measurement classical-classical (CC) state, and hence such states are the objects in our formalism.

Let us then analyze a CC state of $X\cR'$, i.e., just joint probability distributions $\p^{X\cR'}$ composing an $n \times \ell$ matrix $P=[p_{xr'}]$. In the presence of memory, we need to take into account that Bob can use his register to infer about the state of Alice's system. To include this case, we introduce the operations that generalize random relabelings. Namely, Bob is allowed to perform on his classical register $\cR'$ the arbitrary trace-nonincreasing operations (sub-stochastic maps $R^{(j)}$) that sum up to row-stochastic map, and Alice's allowed action (random relabeling given by $n\times n$ doubly stochastic matrix $D^{(j)}$) on $X$ can depend on Bob's output ($j$) -- it can actually be only a mental action of Bob, but it is convenient to assign it to Alice. This results in the following general form of a ``conditional certainty-nonincreasing" operations, which we call \textit{classical-conditioned random relabelings} (cf. \appendix~\ref{subsec1A}-\ref{subsec1D}):
\be\label{qform}
P\mapsto Q=\sum_{j} D^{(j)}P R^{(j)},
\ee
where $j$ can run over an arbitrary number of values. Since the effective transformation acting on $X$ is always doubly stochastic, the number of rows of $P$ stays intact, however, the arbitrary classical channels allowed on $\cR'$ can change the number of columns in $P$, possibly transforming $\cR'$ to a different classical system $\cR$, leading to an $n \times m$ probability matrix  
$Q=[q_{xr}]$.

Now, whenever $Q$ can be obtained from $P$ as in \eqref{qform}, we say that the probability matrix $Q$ \textit{is conditionally majorized} by $P$ denoted
\be\label{condmaj}
Q\prec_c P,
\ee
which means that the variable $X$ is more uncertain in the CC state $X\cR$ than in the state $X\cR'$, conditioned upon access to side information. Note that the relation of conditional majorization does not constitute a total order (similarly as ordinary majorization), because given two matrices, not always the mapping of the form \eqref{qform} exists. 
Conveniently, the relation \eqref{condmaj} is easily checkable by linear programming methods as shown in \appendix \ref{subsec1E}. 

In this way, we have arrived at the definition of being  more conditionally uncertain: $Q$ is more conditionally uncertain than $P$ when $Q\prec_c P$. In particular, when $Q$ and $P$ are both product, then this reduces to ordinary majorization of their $X$ marginals. On the other hand, when $P$ is maximally correlated, then it does not exhibit conditional uncertainty at all, and all $Q$'s are conditionally majorized by such $P$.

\textit{Quantum conditional majorization.}---  We can analogously define partial order in the case of quantum memory. As before, Alice holds the classical register $X$, whereas Bob now owns a quantum memory $B'$. Together they share a classical-quantum (CQ) state in the form: $\sigma\!=\!\sum_{x=1}^{n}p_x\proj x^X\!\otimes\sigma_{x}^{B'}$, where $p_x$ is the marginal probability of Alice's outputs $x$, $\{\ket x\}_x$ denotes an orthonormal basis consisting of ``classical outcome flags'' on the classical register $X$ and $\sigma_{x}^{B'}$ describes the state of quantum memory system. The conditional majorization is then defined by a class of operations, which we call \textit{quantum-conditioned random relabelings}, that relates two CQ states rather than two CC states (as stated for classically correlated memory): $
\sigma\mapsto\tau=\sum_{x,y=1}^{n}\sum_j p_xD^{(j)}_{yx}|y\rangle\langle y|^X\otimes[\cM^{(j)}(\sigma_{x}^{B'})]^B$,
where $\cM^{(j)}$ represents an arbitrary quantum operations and is a trace-nonincreasing completely positive map, such that the map $\cM(\cdot)\equiv\sum_j\cM^{(j)}(\cdot)$ is trace-preserving. We then say that a CQ state $\tau$ is more conditionally uncertain than $\sigma$, and denote it as $\tau\prec_c \sigma$.

\textit{Characterization of conditional majorization.}--- 
Now, we will present necessary and sufficient conditions for classical-conditional majorization in the form of \emph{monotones}, i.e., real-valued functions that preserve the partial order under conditional majorization (see \appendix \ref{subsec1F} for more details):
\begin{theorem}
For joint probability distributions $Q^{X\cR}$ and $P^{X\cR'}$, we have that $Q^{X\cR} \prec_c P^{X\cR'}$ iff for all convex symmetric functions $\Phi$ the following inequality holds:
\be\label{theorem1}
\sum_{r}q_{r}\Phi(\q^{X|r}) \leq \! \sum_{r'}p_{r'}\Phi(\p^{X|r'}),
\ee
where $q_{r}$ is the marginal probability distribution of $\cR=r$, and $\q^{X|r}$ is the conditional probability distribution of $X$ given $\cR=r$.
\end{theorem}

The theorem implies that the proper monotones $\cMm_{\cC\cU}$ that quantify conditional uncertainty are given by the averages of arbitrary convex functions $\Phi$, as seen in Eq. \eqref{theorem1}. The minimal set of monotones that determines the partial order under conditional majorization is not identified, nevertheless there exists a set of monotones that is sufficient as a condition for conditional majorization. It can be achieved by restricting the functions $\Phi$  in Eq. \eqref{theorem1} to the subset of functions $\Phi_A$ in the form (\appendix \ref{subsec1F}):
\be\label{monotoneA}
\Phi_A(\q^{X|r})\equiv \max_{k} \; (\a_{k})^{\downarrow} \bullet (\q^{X|r})^{\downarrow},
\ee
where $\a_{k}$ is the $k^{\text{th}}$ column of a row-stochastic matrix $A=[a_{xk}]$, and down arrow denotes the alignment of vector components in nonincreasing order. Interestingly, there exists a finite number of conditions for conditional majorization, provided by the standard linear programming techniques, which are however not given in terms of monotones and are rather nonintuitive.

The developed concept of conditional uncertainty \textit{via} conditional majorization allows now for the proper construction of CURs. In particular, two different approaches are possible. On one hand, one can derive CURs based on some monotone $\cMm_{\cC\cU}$, given in the generic form: $\cMm_{\cC\cU}(Q^{X\cR})\leq\gamma$, where $\cMm_{\cC\cU}(Q^{X\cR})=\sum_{r}q_{r}\Phi(\q^{X|r})$  is obtained by choosing a fixed function $\Phi$ (or $\Phi_A$ given in Eq. \eqref{monotoneA}), $Q^{X\cR}$ emerges from quantum mechanical measurements and a bound $\gamma$ depends only on the observables. In the second approach, one can use directly the very definition of conditional uncertainty given by conditional majorization and derive the universal (i.e., measure-independent) CUR in the form $Q^{X\cR} \prec_c \Omega$, a memory-assisted generalization of \eqref{UUR}. In the following, we will start by choosing a single, simple monotone and derive a CUR based on it. Subsequently, we will use the obtained relation to provide a whole family of universal CURs.

In concrete applications, $X$ will be register built of  outcomes of observables 
$A_1$ and $A_2$ whose uncertainty we are going to describe. 
It can be given by $x=(a_1, a_2)$ (giving rise to CUR of ``tensor product" type)
but also by  $x=a$ where $a$ runs through union of the observables outputs ("direct sum" type). Our examples below will be of the first kind.

\textit{Application 1: CUR based on a monotone.}--- Let us first note that for a classical memory, a given CUR can be derived from a corresponding
UR, in a similar way as the CUR \eqref{Renes_clas} was derived from Maassen-Uffink UR \eqref{MaassenUffink}, provided the UR is given in terms of a convex function. Consider the UR in the form
\be
\label{URgeneral}
\cMm_{\cU}(\q)\equiv \Phi(\q^{A_1}\ot \q^{A_2}) \leq \gamma,
\ee 
where $q^{A_1} \left( q^{A_2}\right) $ results from measuring $A_1$ ($A_2$) on a quantum system and $\gamma$ is a constant dependent only on the observables $A_1,A_2$. Then, the corresponding CUR appears as
\be
\label{CURgeneral}
\cMm_{\cC\cU}(Q^{X\cR}) \equiv \sum_r q_r \Phi(\q^{A_1|r}\ot \q^{A_2|r}) \leq \gamma
\ee
with $q_{r}$ denoting the marginal probability distribution of $\cR=r$, and $\q^{A_1|r}$ ($\q^{A_2|r}$) describing the probability distribution of an observable $A_1$ ($A_2$) conditioned on $\cR=r$.

Here,  the measurement scenario is established in a bipartite setting, where two parties, Alice and Bob, initially share two copies of a pure state: $\Psi^{A_1B_1}\otimes\Psi^{A_2B_2}$. On the first copy, Alice and Bob perform POVMs $\left\{M^{A_1}_{a_1}\right\}$ and $\left\{M^{R}_{r}\right\}$, respectively, whereas on the second copy Alice measures $\left\{M^{A_2}_{a_2}\right\}$ and Bob repeats the measurement performed on the first copy. The
resulting measurement statistics are given by
\ben
\label{elements1} 
q_{a_1 r}&=&\tr\left[ \left( M^{A_1}_{a_1}\otimes M^{R}_{r} \right)\Psi^{A_1B_1} \right],\\
\label{elements2} 
q_{a_2 r}&=&\tr\left[ \left( M^{A_2}_{a_2}\otimes M^{R}_{r} \right)\Psi^{A_2B_2} \right].
\een

By combining the Alice's measurement outcomes into a single variable $x\equiv(a_1,a_2)$, one obtains the matrix $Q^{X\cR}=[q_{xr}]$ with 
\be
\label{q_elements}
q_{xr}=q_r q_{a_1|r} q_{a_2|r}.
\ee
Then, using the result \eqref{UUR} and \eqref{CURgeneral}, we derive (see \appendix \ref{subsec2A}) a monotone-based CUR in the form:
\be
\label{eq:CUR_major}
\sum_r q_r \Phi_l(\q^{X|r})\leq \eta_l,
\ee
where $\Phi_l(\q^{X|r})$ is the sum of $l$ largest probabilities of the distribution $\q^{X|r}$, and the functions $\Phi_l$ are obtained from  Eq. \eqref{monotoneA} by choosing  $A= (\underbrace{1,\ldots, 1}_{l},0,\ldots 0)^T$.
For $l=1$, an explicit form of $\eta_l$ in terms of observables was provided in \cite{Puchaa2013,Friedland2013}, and is given by $\eta_1=  \tfrac14 (1+c)^2$ 
with $c$ as in Eq. \eqref{MaassenUffink}.

{\it Remark.} Note that in Eq. \eqref{eq:CUR_major} one cannot take just a function that quantifies non-conditional uncertainty, and condition on it.
For example R\'enyi entropy for $\alpha >1$ is a valid quantifier of non-conditional uncertainty because it is monotonic under majorization, but it violates conditional majorization order because it is not convex.

\begin{figure}[t!]
  \centering
  \includegraphics[width=0.25\textwidth]{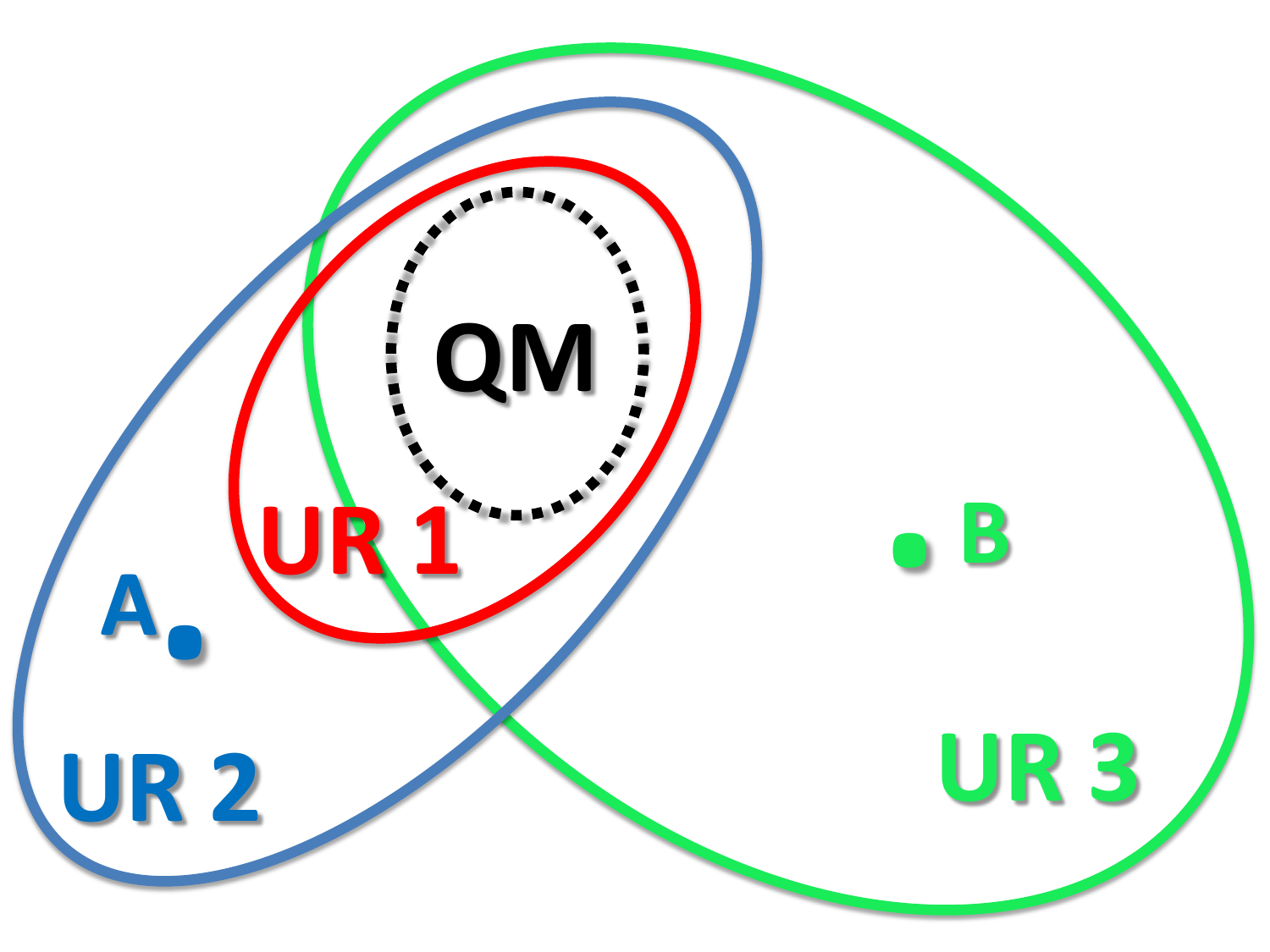}\\
  \caption{How to compare different uncertainty relations. The solid line sets represent three distinct URs: UR1, UR2 and UR3. The statistics allowed by a given UR are represented by points contained in the corresponding set, whereas the excluded statistics do not belong to the set. The statistics originating from quantum-mechanical measurements obey all the URs and form a dashed line subset QM of all solid line sets. We say that UR1 is stronger than UR2 since UR1$\subset$ UR2, i.e., all statistics allowed by UR2 are also allowed by UR1, but UR1 excludes some statistics allowed by UR2 (e.g., point A). On the other hand, UR2 is independent of UR3 if UR2 $\cap$ UR3 $\neq$ UR2 $\neq$ UR3, i.e., it excludes some statistics allowed by UR3 (point B), but also allows some statistics excluded by UR3 (point A). Naturally, the same reasoning applies to comparing different CURs.}
  \label{scheme}
\end{figure}

\textit{Application 2: Measure-independent CUR based on conditional majorization.}--- 
The CUR obtained from monotone $\Phi_1$ will now serve  
as a tool to derive a family of CURs given by a conditional majorization relation, hence independent of any particular measure. For a previously examined measurement scenario, we lower-bound the conditional uncertainty of $X|\cR$ in terms of conditional majorization. Namely, we construct a ``matrix-valued'' upper bound on $Q^{X\cR}=[q_{xr}]$ of Eq. \eqref{q_elements} under the ``$\prec_c$'' partial order by finding some $\Omega\in\cC\cC^n$, such that $Q^{X\cR} \prec_c \Omega$ (cf. \appendix \ref{subsec2B}):  

\begin{theorem}
Define $c\equiv\max_{a_1,a_2} |\<a_1|a_2 \>|$, where $a_1$ and $a_2$ are eigenvectors of Alice's $A_1$ and $A_2$ measurements, respectively.
Then for arbitrary  state $\rho^{AB}$ and an arbitrary chosen measurement on system $B$ we have
\ben\label{eq:Omega}
Q^{X\cR} \prec_c \Omega,
\een
with $Q^{X\cR}=[q_{xr}]$ defined in Eq. \eqref{q_elements}, and
\be\label{coromega}
\Omega =
\left[\begin{array}{ccccc}
\alpha              & 0                    &  \ldots& 0\\
(1-\alpha) \omega_1 & (1-\alpha) \omega_2  & \ldots & (1-\alpha) \omega_n
\end{array}
\right]^T,
\ee
where
$\bs\omega=(\underbrace{\beta, \ldots, \beta}_{l}, 1 - l \beta, 0\ldots 0)$ with $l$ being the largest integer such that $\beta l \leq 1$ and $\alpha,\beta$ satisfy $\alpha \beta \leq \frac14 (1 +c)^2$.
\end{theorem}

The relation \eqref{eq:Omega} constitutes a nontrivial CUR when $\alpha$ and $\beta$ are strictly less than $1$.
This happens for ${\frac14(1 +c)^2<\alpha<1}$. If we set $\alpha=1$, and take the systems $B$ to be trivial, the CUR reduces to majorization UR of \eqref{UUR} with $\bs\omega=(\omega_1,\omega_2,...)$, where $\omega_1=\frac14(1+c)^2$, $\omega_2= 1- \omega_1$, and $\omega_i=0$ for $i>2$.  

\begin{figure}[t!]
  \centering
  \includegraphics[width=0.35\textwidth]{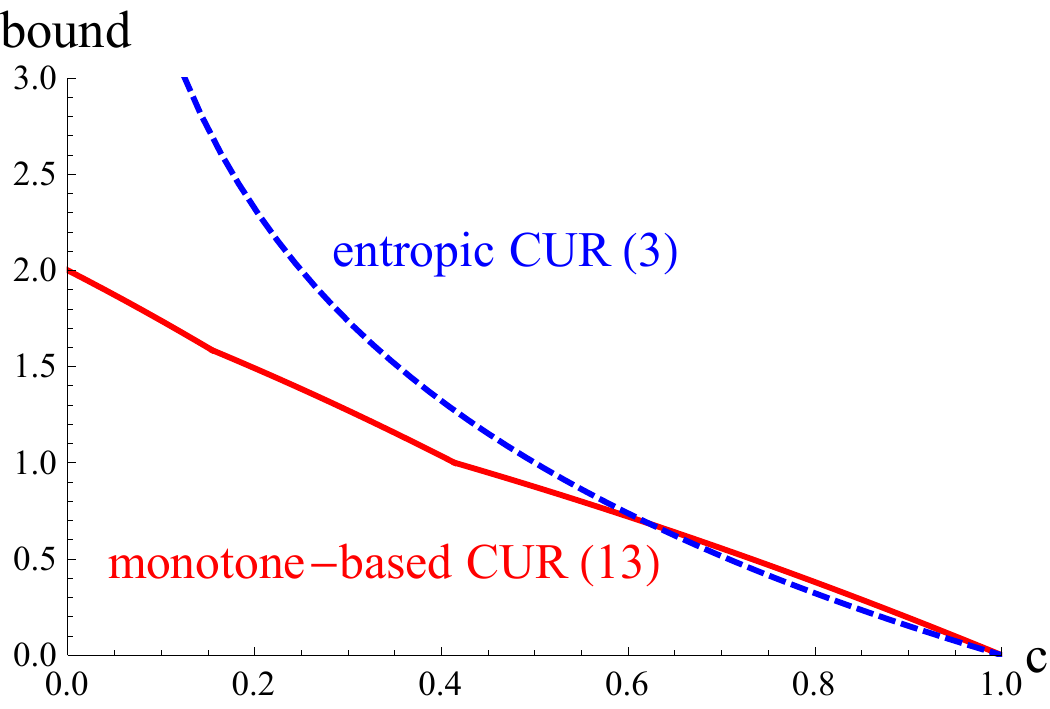}\\
  \caption{Comparison of entropic CUR \eqref{Renes_clas} with its counterpart implied by monotone-based CUR \eqref{eq:CUR_major}. The bound for entropic CUR constitutes the RHS of \eqref{Renes_clas}, whereas the bound for monotone-based CUR is given by minimal conditional entropy for probability distributions which satisfy \eqref{eq:CUR_major},  see \appendix \ref{subsec3B} for details.}
  \label{plot}
\end{figure}

\textit{Comparison with existing CURs.}--- One might wonder what motivation stands behind the studying of universal uncertainty relations in the first place. Let us then delve into the fundamental and, by all means, interesting insights of the obtained universal CUR in Theorem 2. First of all, the universal CUR is a versatile tool with which one can derive a CUR for any arbitrary measure of conditional uncertainty, even for the functions not yet considered in this context in literature. Secondly, it does not merely say that some measure of uncertainty is non-trivially limited, but it gives insight into the \emph{structure} of quantum-mechanical distributions.

It might seem that the price for the universality of the measure-independence is that the resulting CURs would come out weaker comparing to existing CURs based on particular measures (like entropy). However, remarkably, it is not the case. To see it, we first explain in Fig. \ref{scheme} how to compare various (C)UR's. It turns out that our universal CUR \eqref{eq:Omega} is \emph{independent} of entropic CUR 
\eqref{Renes_clas}, see \appendix \ref{subsec3A}. In particular, it can be shown that \eqref{eq:Omega} excludes statistics with too large conditional  probabilities of certain events (one for observable $A_1$, and the other for observable $A_2$), while such statistics are allowed by entropic CUR  \eqref{Renes_clas}. Moreover, for states such that the memory is classical, the monotone-based CUR \eqref{eq:CUR_major}  with $l=1$ appears to be stronger than \eqref{Renes_clas} for some range of parameter $c$ (Fig. \ref{plot}).

\textit{Concluding remarks.} 
In this work, we have introduced an operational framework that through a mathematical relation ``$\prec_c$'' called conditional majorization defines conditional uncertainty{\textemdash}that is, uncertainty of a classical variable (e.g., a measurement outcome) conditioned on access to a correlated memory (either classical or quantum). In applications, we focused on the case of classical memory, and constructed two conditional uncertainty relations in a bipartite setting: a monotone-based CUR and a universal measure-independent CUR. Both of them seem to beat exisiting entropic CUR to some extent.

The framework of conditional uncertainty is however more general and versatile, and can be used in many other scenarios. Among others, our approach applied here to the ``tensor product'' type majorization \eqref{URgeneral}-\eqref{CURgeneral}, can be easily applicable to the direct sum majorization \cite{DSum,DSC,NPG15} as well. Also, our framework opens a new path of research, with many challenges, to mention a possible generalization to cover the case of explicit dependence on entanglement as in \cite{Berta2010}. Another important project to be undertaken is the application of our methods to concrete cryptographic tasks, such as key distribution and coin-tossing. Replacing the existing, entropy-based methods with majorization-based methods is likely to improve the analysis of the single-shot or finite-size case of such tasks, possibly beyond the improvement already afforded by the smooth R\'enyi entropy calculus.

\begin{acknowledgments}
This work is supported by ERC Advanced Grant QOLAPS and National Science Centre grants: Maestro DEC-2011/02/A/ST2/00305 and OPUS 9. 2015/17/B/ST2/01945. VN acknowledges financial support from the Ministry of Education of Singapore, the National Research Foundation (NRF), NRF-Fellowship (Reference No: NRF-NRFF2016-02) and the John Templeton Foundation (Grant No 54914).
\end{acknowledgments}

\setcounter{equation}{0}
\renewcommand\theequation{A\arabic{equation}}
\section*{Appendix}
\section{Classical-conditional majorization}
\label{sec1}
Consider the following scheme: Alice holds a system of interest, $X$; Bob holds another system, $\cR'$, which is the classical memory. The state they share is a basic object of our formalism and is given by a classical-classical state $P^{X\cR'}\in\cC\cC^n$, which we will simply denote as $P$. Let then $P=[p_{xr'}]$ be an $n \times \ell$ matrix whose components are the joint probabilities $p_{xr'}$. Hereafter, we use the notation $p_{r'}\equiv\sum_xp_{xr'}$ for the marginal probability of $\cR'=r'$, and $\p^{X|r'} \equiv \p^{|r'}\equiv\left(p_{x|r'}\right)_x=p_{r'}^{-1}\left(p_{xr'}\right)_x$ for the conditional distribution of $X$ given $\cR'=r'$. The same notation applies to a $n \times m$ probability matrix $Q^{X\cR} \equiv Q=[q_{xr}]$. Denote by $\R^{n\times\ell}_{+,1}$ the set of all $n\times\ell$ row-stochastic matrices, and by $\R^{n\times\ell}_{+}$ the set of all $n\times\ell$ matrices with non-negative entries. Then, $P\in\R^{n\times\ell}_{+}$ and $Q\in\R^{n\times m}_{+}$.
\subsection{Condition for the conditional majorization}
\label{subsec1A}
In the main text, in Eq. (4) we defined the class of operations, called \textit{classical-conditioned random relabelings} (CCR), that do not decrease classical conditional uncertainty. This led to the following condition for the classical-conditional majorization relation $Q\prec_cP$:
\be\label{qforma}
Q=\sum_{j} D^{(j)}P R^{(j)},
\ee
where each $D^{(j)}$ is an $n\times n$ doubly stochastic matrix, and each $R^{(j)}$ is an $\ell\times m$ matrix of non-negative entries, with $\sum_jR^{(j)}$ row-stochastic.

\begin{remark}
Arbitrarily reordering the elements of each column of $P$ is a CCR: Just choose $D^{(j)}=\Pi_j$ (permutations), and $R^{(j)}=e_{jj}$, i.e., a matrix with zeros everywhere except the $(j,j)$ element, which is chosen to be 1.
\end{remark}
In component-wise form it reads as:
\be\label{components}
q_{xr}=\sum_{j}\sum_{r'=1}^\ell \left(D^{(j)}\p_{r'}\right)_{x}R^{(j)}_{r'r}.
\ee
Here $(D^{(j)}\p_{r'})_x$ is the $x^\mathrm{th}$ component of the vector $D^{(j)}\p_{r'}$, where $\p_{r'}$ is the $(r')^\mathrm{th}$ column of $P$ (we will similarly use $\q_r$ for the columns of $Q$). To simplify this relation, we denote
\be
t_{r'r}\equiv \sum_{j}R^{(j)}_{r'r}\;\;\text{ and }\;\;D^{(r',r)}\equiv\sum_{j}\frac{R^{(j)}_{r'r}}{t_{r'r}}D^{(j)}.
\ee
Note that $D^{(r',r)}$ is an $n\times n$ doubly stochastic matrix and $T=[t_{r'r}]$ is an $\ell\times m$ row-stochastic matrix. With these notations the relation in~\eqref{components} becomes
\be\label{qpa}
\q_{r}=\sum_{r'=1}^\ell t_{r'r}D^{(r',r)}\p_{r'}.
\ee

\begin{lemma}\label{lem}
$Q\prec_{c}P$ if and only if there exists a set of $n\times n$ doubly stochastic matrices $D^{(r',r)}$ ($\ell m$ in number) and an $\ell\times m$ row-stochastic matrix $T\equiv[t_{r'r}]$ such that the columns of $P$ (denoted $\p_{r'}$) are related to those of $Q$ ($\q_r$) through \eqref{qpa}.
\end{lemma}

\begin{proof}
We already proved that $Q\prec_{c}P$ implies the existence of $D^{(r',r)}$ and $T$ with the desired properties, satisfying \eqref{qpa}. It remains to show that if such matrices exist, then there exist $D^{(j)}$ and $R^{(j)}$ that satisfy~\eqref{qforma}. Indeed, since $D^{(r',r)}$ are doubly stochastic, we can write them as $D^{(r',r)}=\sum_{j}s^{(j)}_{r'r}\Pi^{(j)}$, where $\Pi^{(j)}$ are permutation matrices and $\sum_js^{(j)}_{r'r}=1$. Hence, taking $R^{(j)}$ to be the matrix whose components are $t_{r'r}s^{(j)}_{r'r}$, and $D^{(j)}$ the permutation matrices $\Pi^{(j)}$, completes the proof.
\end{proof}

\subsection{Standard form}\label{subsec1B}
In the following, we introduce the notion of a standard form that simplifies a characterization of classical-conditional majorization.

We say that $Q$ and $P$ are \emph{conditionally equivalent}, and write $Q\sim_c P$, if
$Q\prec_cP$ and $Q\succ _c P$. Identifying cases of such equivalence leads to further simplification, resulting in the following standard form:
\begin{definition}[Standard Form]\label{stf}
Let $Q=[q_{xr}]$ be an $n\times m$ joint distribution matrix. Its standard form $Q^{\downarrow}=[q^{\downarrow}_{xs}]$ is an $n\times\tilde{m}$ matrix, where $\tilde{m}\le m$. To obtain $Q^\downarrow$, we apply the following transformations on $Q$:
\begin{enumerate}
\item Reordering within columns: Arrange the elements of each column of $Q$ in non-increasing order. Since this is a reversible CCR, we can just assume WLOG that $Q$ has this form. That is, for all $r=1,...,m$, $q_{1r}\ge q_{2r}\dots\ge q_{nr}$.
\item Combining proportional columns: If two columns of $Q$, say those corresponding to $r$ and $\tilde{r}$, are proportional to each other (i.e., $\q^{|r}=\q^{|\tilde{r}}$), then we replace both by a single column $(q_r+q_{\tilde{r}})\q^{|r}$. We do this until no two columns are proportional. The resulting matrix $\tilde Q$ contains $\tilde{m}\le m$ columns, each of the form $\tilde q_s\tilde\q^{|s}$.
\item Reordering the columns: Reorder the $\tilde{m}$ columns of $\tilde Q$ in non-increasing order of $\tilde q_s$. If there are ties, resolve them by ranking the tied columns in non-increasing order of their first component $\left(\tilde q_{1s}\right)$. If there remain ties, we rank by non-increasing $\left(\tilde q_{1s}+\tilde q_{2s}\right)$, and so on.
\end{enumerate}
The resulting $n\times\tilde{m}$ matrix is the final standard form $Q^\downarrow$. 
\end{definition}

Below, we show that $Q^\downarrow\sim_cQ$.
\begin{lemma}\label{sim}
Consider an $n\times m$ probability matrix $Q=[\q_1,...,\q_m]$ such that
one of the columns, say $\q_1$, is a multiple of another column, say $\q_2$.
That is, there exists non-negative real number $\lambda$ such that $\q_1=\lambda\q_2$.
Then,
\be
Q\sim_c Q'\equiv [(1+\lambda)\q_2,\q_3,...,\q_m],
\ee
where $Q'$ is $n\times (m-1)$ probability matrix.
\end{lemma}
\begin{proof}
Let $T=[\bt_1,...,\bt_{m-1}]$ be the following $m\times (m-1)$ row stochastic matrix
\be
T=\begin{pmatrix}
1 & 0 & \cdots & 0\\
\; & & \; I_{m-1} & \;
\end{pmatrix},
\ee
where $I_{m-1}$ is the $(m-1)\times(m-1)$ identity matrix. Note that $QT=Q'$ and therefore $Q'\prec_c Q$.
Let $U=[\u_1,...,\u_{m-1}]$ be the following $(m-1)\times m$ row stochastic matrix
\be
U=\begin{pmatrix}
\frac{\lambda}{1+\lambda} & \frac{1}{1+\lambda} & 0 & \cdots & 0\\
0 & 0 & 1 & \cdots & 0\\
\vdots & \vdots & \; & \ddots  & \vdots\\
0 & 0 & \; & \; & 1
\end{pmatrix}.
\ee
Note that $Q'U=Q$ and therefore $Q\prec_c Q'$. Hence, $Q\sim_c Q'$.
\end{proof}

\subsection{Characterization of conditional majorization through the lower-triangular matrix L}\label{subsec1C}
Hereafter, we will (often implicitly) assume all states to be in their standard form, without loss of generality. This enables an elegant characterization of conditional majorization through the $n\times n$ lower-triangular matrix
\be
\label{Lmatrix}
L=
\begin{pmatrix}
1 & 0 & 0 & \cdots & 0\\
1 & 1 & 0 & \cdots & 0\\
1 & 1 & 1 & \cdots &0\\
\vdots & \vdots & \; & \ddots & \vdots\\
1 & 1 & 1 & \cdots & 1
\end{pmatrix}.
\ee
\begin{lemma}\label{lem3}
For $Q$ and $P$ in the standard form, $Q\prec_{c} P$ if and only if there exists a row-stochastic matrix $T$ such that
\be\label{mqpt}
LQ \le LPT,
\ee
where the inequality is entry-wise.
\end{lemma}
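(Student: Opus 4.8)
The plan is to read the matrix $LP$ as the array of cumulative column sums: because $P$ is in standard form, each column $P_{\cdot y}$ is arranged in non-increasing order, so $(LP)_{xy}=\sum_{x'=1}^{x}p_{x'y}$ is exactly the $x$-th majorization partial sum of the $y$-th column, and likewise for $LQ$. The single fact I would isolate first is the elementary inequality that for any doubly stochastic matrix $D$ and any vector $\mathbf p$ with non-increasing entries, $LD\mathbf p\le L\mathbf p$ entrywise. This holds because $D\mathbf p\prec\mathbf p$, so the sum of the top $x$ entries of $D\mathbf p$ is at most the corresponding sum for $\mathbf p$, while $(LD\mathbf p)_x$, being a partial sum taken in the given order, can only be smaller than the sum of the top $x$ entries of $D\mathbf p$. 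Applied columnwise this gives $LD^{(j)}P\le LP$ for every doubly stochastic $D^{(j)}$.

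For necessity ($\Rightarrow$), suppose $P\succ_c Q$, so $Q=\sum_j D^{(j)}PR^{(j)}$ is a CCR. I would simply take $T\equiv\sum_j R^{(j)}$, which is row-stochastic by the definition of a CCR and of the correct size. Then $LPT=\sum_j (LP)R^{(j)}$ while $LQ=\sum_j (LD^{(j)}P)R^{(j)}$, and since $LP\ge LD^{(j)}P$ entrywise by the lemma above and each $R^{(j)}$ has non-negative entries, every summand satisfies $(LP)R^{(j)}\ge (LD^{(j)}P)R^{(j)}$; summing over $j$ yields $LPT\ge LQ$.

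For sufficiency ($\Leftarrow$), given a row-stochastic $T$ with $LPT\ge LQ$, I would reconstruct an explicit CCR. First I would note that equality of column totals is forced: the last-row entries of $LPT\ge LQ$ read $\sum_y p_y T_{yz}\ge q_z$, and summing over $z$ makes both sides equal to $1$ (using row-stochasticity of $T$ and normalization of $P$ and $Q$), so in fact $\sum_y p_y T_{yz}=q_z$ for every $z$. Writing $\mathbf r^{(z)}\equiv(PT)_{\cdot z}=\sum_y T_{yz}P_{\cdot y}$, the remaining rows of $LPT\ge LQ$ say the sorted partial sums of $Q_{\cdot z}$ (sorted because $Q$ is in standard form) are dominated by those of $\mathbf r^{(z)}$; combined with the matched totals this is exactly $Q_{\cdot z}\prec\mathbf r^{(z)}$. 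By the Hardy--Littlewood--P\'olya equivalence there is a doubly stochastic $D^{(z)}$ with $Q_{\cdot z}=D^{(z)}\mathbf r^{(z)}=\sum_y T_{yz}D^{(z)}P_{\cdot y}$. Setting $R^{(z)}_{yz'}\equiv T_{yz}\delta_{zz'}$, which is non-negative with $\sum_z R^{(z)}=T$ row-stochastic, gives $\sum_z D^{(z)}PR^{(z)}=Q$, a valid CCR witnessing $P\succ_c Q$.

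The easy direction is necessity, where the single inequality $LD\mathbf p\le L\mathbf p$ carries all the weight. I expect the main obstacle to be the bookkeeping in sufficiency: one must observe that the merely entrywise inequality $LPT\ge LQ$ together with global normalization secretly forces exact equality of the per-column masses, which is precisely what upgrades weak submajorization to genuine majorization so that Hardy--Littlewood--P\'olya applies, and then one must repackage the column-indexed doubly stochastic maps and the transport plan $T$ into a single family $(D^{(z)},R^{(z)})$ meeting every CCR constraint.
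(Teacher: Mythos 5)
Your proof is correct and follows essentially the same route as the paper's: necessity rests on the fact that a doubly stochastic matrix cannot increase the partial sums of a sorted column (so $LD^{(j)}P\le LP$ and hence $LQ\le LPT$ with $T=\sum_j R^{(j)}$), and sufficiency recovers the doubly stochastic maps from the columnwise majorization $\q_z\prec (PT)_{\cdot z}$. The only divergence is cosmetic and occurs in the sufficiency step: the paper invokes the weak-majorization form of the Hardy--Littlewood--P\'olya characterization (producing $\q_w\le D^{(w)}\a_w$ entrywise and forcing equality by normalization only at the end), whereas you first pin down the exact equality of column masses from the last row of $LPT\ge LQ$ and then apply the genuine-majorization version directly---a slightly cleaner ordering of the same ingredients.
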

\begin{proof}
Since all the $D^{(r',r)}$ in~\eqref{qpa} are doubly stochastic, we get that
\be\label{par}
\sum_{x=1}^{k}q_{xr}\leq \sum_{r'=1}^\ell t_{r'r}\sum_{x=1}^{k}p_{xr'},\;\;\;\forall\;k=1,...,n.
\ee
Note that by taking the sum over $r$ on both sides, we get that the marginal distributions $\q=(\sum_rq_{xr})_x$ and $\p=(\sum_{r'}p_{xr'})_x$ satisfy $\q\prec\p$.
Denote by $\tilde{Q}$ and $\tilde{P}$ the matrices whose components are
\be
\tilde{q}_{kr}=\sum_{x=1}^{k}q_{xr}\;\;\text{ and }\;\;\tilde{p}_{kr'}=\sum_{x=1}^{k}p_{xr'}.
\ee
That is,
\be
\tilde{Q}=LQ\;\;\text{ and }\;\;\tilde{P}=LP,
\ee
with $L$ defined in \eqref{Lmatrix}.
Due to Eq.~\eqref{par}, if $Q\prec_{c} P$ then there exists a row-stochastic matrix $T$ such that
$\tilde{Q}\leq\tilde{P}T$.
Conversely, suppose there exists a row stochastic matrix $T$ satisfying~\eqref{mqpt}.
Denote $A\equiv PT$ and the components of $A$ by
\be
a_{kr}=\sum_{r'=1}^{\ell}p_{kr'}t_{r'r}.
\ee
The condition $\tilde{Q}\leq\tilde{P}T$ is equivalent to $\q_{r}\prec_{\rm w}\a_{r}$, where
$\a_{r}=(a_{kr})_{k}$, and the symbol $\prec_{\rm w}$ stands for weak majorization (i.e., instead of equality, we have $\sum_{k=1}^{n}q_{kr}\leq\sum_{k=1}^{n}a_{kr}$).
It is known (see e.g., \cite{Marshall2011}) that $\q_{r}\prec_{\rm w}\a_{r}$ if and only if
there exists a non-negative (entry-wise) matrix $S^{(r)}$ and a doubly stochastic matrix $D^{(r)}$ such that
$\q_{r}=S^{(r)}\a_{r}$ and $S^{(r)}\leq D^{(r)}$ entry-wise.
Therefore, there exists doubly stochastic matrix $D^{(r)}$ such that
$\q_{r}\leq D^{(r)}\a_{r}$. In components it reads
\be
q_{xr}\leq \sum_{k=1}^{n}D^{(r)}_{xk}a_{kr}=\sum_{k=1}^{n}\sum_{r'=1}^{\ell}D^{(r)}_{xk}p_{kr'}t_{r'r}\equiv q_{xr}'.
\ee
However, since the components $q_{xr}$ and $q_{xr}'$ both sums to one, the condition $0\leq q_{xr}\leq q_{xr}'$ implies that $q_{xr}=q_{xr}'$. Hence,
this equation is equivalent to~\eqref{qpa} with $D^{(r',r)}\equiv D^{(r)}$, and from Lemma~\ref{lem} it follows that $Q\prec_{c} P$.
This completes the proof.
\end{proof}

\begin{remark}
Note that the proof of Lemma~\ref{lem3} also implies that
$Q\prec_c P$ if and only if there exist $m$ doubly stochastic matrices $D^{(r)}$, and a row stochastic matrix $T$ such that
\be\label{qp1}
\q_{r}=D^{(r)}\sum_{r'=1}^\ell t_{r'r}\p_{r'},\;\;\;\forall\;r =1,...,m.
\ee
This is a simpler version of~\eqref{qpa}. For any set of $m$ doubly stochastic matrices $\cD=\{D^{(1)},...,D^{(m)}\}$, and an $n\times m$ matrix $A$, whose columns are $\a_z$, we define
\be
\label{sqr_brackets1}
\cD[A]:=[D^{(1)}\a_1,...,D^{(m)}\a_m].
\ee
With these notations
\be
\label{sqr_brackets2}
Q\prec_c P\;\;\iff\;\;Q=\cD[PT]
\ee
for some set of $m$ doubly stochastic matrices $\cD$ and a row stochastic matrix $T$.
\end{remark}

\subsection{Properties of the conditional majorization relation}\label{subsec1D}
The standard form (see Definition \ref{stf}), together with the simplification afforded by Lemma~\ref{lem3}, leads to some nice properties of the conditional majorization relation:
\begin{theorem}\label{thm1}
Let $P$, $Q$, $R$ be three probability matrices. Then,
\begin{align}
\label{reflexivity}
&\text{\rm Reflexivity:}\;\;Q\prec_{c} Q.\\
\label{transitivity}
&\text{\rm Transitivity:}\;Q\prec_{c} P \text{ and }P\prec_c R\;\Rightarrow\;
Q\prec_c R.\\
\label{antisymmetry}
&\text{\rm Antisymmetry:} \;Q\prec_{c} P\text{ and }P\prec_c Q\;\Rightarrow\;Q^{\downarrow}\!=\!P^{\downarrow}.
\end{align}
That is, $\prec_c$ is a partial order with respect to the standard form.
\end{theorem}
\begin{proof}
Reflexivity \eqref{reflexivity} and transitivity \eqref{transitivity} of $\prec_c$ follow directly from its definition in~\eqref{qforma}. To prove antisymmetry \eqref{antisymmetry}, suppose $Q\prec_c P$ and $P\prec_c Q$ (i.e., $Q\sim_c P$), and suppose $Q$ and $P$ are given in their standard form; that is, we assume $Q=Q^{\downarrow}$ and $P=P^{\downarrow}$. From Lemma~\ref{lem3} we know that there exist stochastic matrices $T$ and $R$ such that $LQ\leq LPT$ and $LP\leq LQR$. Combining these two inequalities gives
\be
LQ\leq LQRT\;\;\text{ and }\;\;LP\leq LPTR.
\ee
Since $RT$ is a row stochastic matrix, the sum of the columns of $LQ$ and $LQRT$ are the same. Therefore, we must have $LQ=LQRT$, and using similar arguments we also get $LP=LPTR$. Since $L$ is invertible, this in turn gives
\be\label{hhh}
Q=QRT\;\;\text{ and }\;\;P=PTR.
\ee
We next prove, that $RT$ and $TR$ must be the identity matrices.
\begin{lemma}
\label{lem5}
Let $A$ be an $m\times m$ row stochastic matrix, and $Q=Q^{\downarrow}$ is an $n\times m$ matrix with non-negative components. Then,
\be
Q=QA\;\;\Rightarrow\;\;A=I_m.
\ee
\end{lemma}
The proof is by induction over $m$. For $m=1$,
$Q$ is an $n\times 1$ column matrix. Then, $A$ is a $1\times 1$ row stochastic matrix,
i.e., the number 1. Therefore the Lemma \ref{lem5} holds for $m=1$. Next, suppose the lemma holds for all $n\times m$ non-negative matrices $Q^{(m)}$ (in their standard form); i.e., if $Q^{(m)}=Q^{(m)}A^{(m)}$ for some $m\times m$ row stochastic matrix
$A^{(m)}$, then $A^{(m)}=I_m$. We need to show that the same holds for all $n\times (m+1)$ non-negative matrices $Q^{(m+1)}$ (in their standard form).
Indeed, let $A^{(m+1)}$ be an $(m+1)\times (m+1)$ row stochastic matrix, and denote
\be
Q^{(m+1)}\!=\!\left(\begin{array}{c|c}
 \! \!Q^{(m)}\! & \q_{m+1}\!\!
 \end{array}\right)\;\text{and}\;
A^{(m+1)}\!=\!\left(\begin{array}{c|c}
  \!\!\!A^{(m)}\! & \!\v\!\! \\
  \hline
  \!\!\!\u^T\! & \!1\!-\!u\!\!
 \end{array}\right),
\ee
where $\u,\v\in\R^{m}_{+}$ and $u$ is the sum of the components of $\u$. Note that while $A^{(m)}$ above is non-negative it is not necessarily row stochastic. More precisely, the sum of the columns of $A^{(m)}$ is $\e-\v$, where $\e=(1,...,1)^T$. Now,
suppose that $Q^{(m+1)}=Q^{(m+1)}A^{(m+1)}$. With the notations above this is equivalent to
\begin{align}\label{mom}
& Q^{(m)}=Q^{(m)}A^{(m)}+\q_{m+1}\u^T,\nonumber\\
& \q_{m+1}=Q^{(m)}\v+(1-u)\q_{m+1}.
\end{align}
The second equation is equivalent to $u\q_{m+1}=Q^{(m)}\v$. Therefore, if $u=0$ then $\v=0$ since $Q^{(m)}$ has no zero columns. Clearly, in this case we also have $\u=0$ so that $Q^{(m)}=Q^{(m)}A^{(m)}$
and therefore $A^{(m)}=I_{m}$ from the assumption of the induction. This also gives $A^{(m+1)}=I_{m+1}$.
We therefore assume now that $u>0$ and thus $\v\neq 0$.
Substituting $\q_{m+1}=\frac{1}{u}Q^{(m)}\v$ into the first equation of~\eqref{mom} gives
\be
Q^{(m)}
=Q^{(m)}\left(A^{(m)}+\frac{1}{u}\v\u^T\right).
\ee
Since the sum of the columns of $Q^{(m)}$ is $1-\v$ we get that $A^{(m)}+\frac{1}{u}\v\u^T$
is an $m\times m$ row stochastic matrix, and therefore from the assumption of the induction we get
\be
A^{(m)}+\frac{1}{u}\v\u^T=I_m.
\ee
However, since $\u,\v,A^{(m)}\geq 0$ the components of $\u$ and $\v$ must satisfy
\be
u_iv_j=0\;\;\text{ for }\;\;i\neq j.
\ee
Since both $\u\neq 0$ and $\v\neq 0$, there must exist $i_0$ such that $u_{i_0}\neq 0$, $v_{i_0}\neq 0$,
and $u_j=v_j=0$ for all $j\neq i_0$. However, in this case,
\be
\q_{m+1}=\frac{1}{u}Q^{(m)}\v=\frac{v_{i_0}}{u_{i_0}}\q_{i_0}.
\ee
That is, $\q_{m+1}$ is a multiple of another column of $Q^{(m+1)}$ contrary to the assumption that $Q^{(m+1)}$ is given in its standard form. Therefore, we must have $u=0$, and as discussed above, this corresponds to $A^{(m+1)}=I_{m+1}$. This completes the proof of the lemma.

Now, using this in the relation~\eqref{hhh} gives  that both $RT=I_m$ and $TR=I_\ell$.
Hence, $m=\ell$ and $R=T^{-1}$. But since both $R$ and $T$ are row stochastic, we must have that they are permutation matrices. Finally, since $Q$ and $P$ are given in their standard form, the permutation matrices $R$ and $T$ must be the identity matrices. This completes the proof of antisymmetry \eqref{antisymmetry} of $\prec_c$.
\end{proof}

\subsection{Linear programming methods}\label{subsec1E}
Consider the condition given in~\eqref{mqpt} for conditional majorization, and let
\be
\bt\equiv\begin{pmatrix}
\bt_1 &\\
\bt_2 &\\
\vdots &\\
\bt_m &
\end{pmatrix}\in\R^{m\ell},
\ee
where $\bt_1,...,\bt_m$ are the columns of $T$. Let $\Gamma$ be the $(nm+\ell)\times \ell m$ real matrix
\be
\Gamma=\begin{pmatrix}
-LP & \; & \; & \;\\
\; & -LP & \; & \;\\
\; & \; & \ddots & \;\\
\; & \; & \; & -LP\\
I_\ell & I_\ell &\cdots & I_\ell
\end{pmatrix},
\ee
where $I_\ell$ is the $\ell\times\ell$ identity matrix.
Finally, let
\be
\b\equiv\begin{pmatrix}
-L\q_1 &\\
\vdots &\\
-L\q_m &\\
\e
\end{pmatrix}\in\R^{(nm+\ell)},\ee
where $\q_1,...,\q_m$ are the columns of $Q$, and $\e\equiv (1,...,1)^T\in\R^\ell$.
With these notations we have the following proposition.
\begin{proposition}
With the same notations as above, $Q\prec_{c}P$ if and only if there exists $0\leq\bt\in\R^{m\ell}$ such that
\be
\Gamma\bt\leq\b.
\ee
\end{proposition}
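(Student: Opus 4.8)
The plan is to recognize this proposition as a direct linear-programming repackaging of Lemma~\ref{lem3}, and to reconcile the one apparent discrepancy between the two formulations. Recall that Lemma~\ref{lem3} asserts $Q\prec_c P$ if and only if there is a row-stochastic $T$ with $LQ\le LPT$ entrywise. First I would decompose this matrix inequality column by column. Writing $T=[\bt_1,\ldots,\bt_m]$, the $w$-th column of $LPT$ is $LP\bt_w$ and that of $LQ$ is $L\q_w$, so $LQ\le LPT$ is equivalent to the $m$ vector inequalities $LP\bt_w\ge L\q_w$, i.e.\ $-LP\bt_w\le-L\q_w$, for $w=1,\ldots,m$. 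These $nm$ scalar inequalities are exactly the first $nm$ rows of $\Gamma\bt\le\b$, by construction of the block-diagonal part of $\Gamma$ and the first $m$ blocks of $\b$.

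Next I would translate the constraint that $T$ be row-stochastic. Entrywise nonnegativity of $T$ is precisely $\bt\ge\0$. The condition that each row of $T$ sum to one reads $\sum_{w=1}^m\bt_w=\e$; the last block row $[I_\ell,\ldots,I_\ell]$ of $\Gamma$, paired with the final block $\e$ of $\b$, encodes instead the weaker inequality $\sum_{w=1}^m\bt_w\le\e$. Thus the feasibility system $\Gamma\bt\le\b$, $\bt\ge\0$ corresponds to a row-\emph{sub}stochastic $T$, and the only real content of the proof is to show this relaxation is harmless.

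The key observation, and the one mildly nontrivial step, is that $LP$ has nonnegative entries, since $(LP)_{ky}=\sum_{x=1}^k p_{xy}\ge0$. Hence $LPT$ is entrywise monotone nondecreasing in the entries of $T$. Given any feasible $\bt\ge\0$ with $\Gamma\bt\le\b$, the corresponding $T$ is row-substochastic; I would augment it to a row-stochastic $\hat T$ by distributing into each row the nonnegative slack $1-\sum_w t_{yw}$ (say, placing it all in a single column). Because the added amounts are nonnegative and $LP\ge0$, we get $LP\hat T\ge LPT\ge LQ$, so $\hat T$ still witnesses $LQ\le LP\hat T$ with $\hat T$ row-stochastic, giving $Q\prec_c P$ by Lemma~\ref{lem3}. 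The converse is immediate: a genuine row-stochastic $T$ furnished by Lemma~\ref{lem3} satisfies $\sum_w\bt_w=\e\le\e$ together with the column inequalities, hence $\Gamma\bt\le\b$ with $\bt\ge\0$. Combining the two directions yields the stated equivalence. I expect no serious obstacle; the argument is bookkeeping of the block structure plus the monotonicity remark, and casting the predicate as feasibility of $\Gamma\bt\le\b$ is precisely what makes it amenable to standard LP solvers, as advertised in the surrounding text.
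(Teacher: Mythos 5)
Your proposal is correct and follows essentially the same route as the paper: both reduce the proposition to Lemma~\ref{lem3} and then observe that the LP's row-\emph{sub}stochastic relaxation is harmless, the paper stating this padding step as a bare "observation" while you justify it via the entrywise nonnegativity of $LP$. No issues.
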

\begin{proof}
The proof follows from Lemma~\ref{lem3}, with the observation that if there exists a matrix $T$ with non-negative entries that satisfies~\eqref{mqpt} with $\sum_{k=1}^{m}\bt_k\leq \e$, then there also exists a matrix $T'$ with non-negative entries that satisfies \eqref{mqpt} and $\sum_{k=1}^{m}\bt{'}_{k}=\e$.
\end{proof}

\subsection{Characterization of conditional majorization in terms of monotones}\label{subsec1F}
Here we present a more general version of the Theorem 1 from the main text that introduces the sufficient conditions for classical-conditional majorization in terms of monotones, i.e., convex functions that preserve the partial order under conditional majorization.

\begin{theorem}\label{mainresult}
Let $Q$ and $P$ be $n\times m$ and $n\times \ell$ joint probability matrices in the standard form. Then, the following conditions are equivalent:
\begin{enumerate}
\item $Q\prec_c P$.
\item For all matrices $A=[\a_1,...,\a_m]\in\R^{n\times m}_{+}$,
\be\label{mra}
\sum_{r=1}^{m}q_{r}\Phi_A(\q^{|r})\leq \sum_{r'=1}^{\ell}p_{r'}\Phi_A(\p^{|r'}),
\ee
where $\Phi_A(\q^{|r})\equiv \max_k \; (\a_{k})^{\downarrow} \bullet (\q^{|r})^{\downarrow}$.
\item For all convex symmetric functions $\Phi$,
\be
\sum_{r=1}^{m}q_{r}\Phi(\q^{|r}) \leq \sum_{r'=1}^{\ell}p_{r'}\Phi(\p^{|r'}).
\ee
\end{enumerate}
\end{theorem}
\begin{remark}
The matrix $A$ can be assumed to be in standard form.
\end{remark}
\begin{proof}
We first prove that $(1) \Longleftrightarrow (2)$. For any $P\in\R^{n\times\ell}_{+}$, denote by
\be
\label{markotop}
\cM(P, k):=\left\{Q'\in\R^{n\times k}_{+}\;:\;Q'\prec_c P\right\},
\ee
which we call the Markotop of $P$. Note that the Markotop of $P$ is a compact convex set.
Its vertices are the set of all matrices of the form $\cD[PT]$ (see \eqref{sqr_brackets1}-\eqref{sqr_brackets2}), where $\cD$ consists of $m$ permutation matrices and
$T$ is a matrix whose rows are elements of the standard basis in $\R^m$.

\begin{lemma}
\label{lem8}
Given an $n\times m$ matrix $Q$ and an $n\times \ell$ matrix $P$ then
\be
Q\prec_c P\;\;\iff\;\;\cM(Q, k)\subseteq\cM(P,k),\;\;\forall\;k\in\N.
\ee
\end{lemma}

\begin{proof}
Suppose $Q\prec_c P$ and let $A\in\cM(Q, k)$. Then, by definition, $A\prec_c Q$ and Theorem~\ref{thm1} gives $A\prec_c P$; that is, $A\in\cM(P, k)$. Conversely, $Q\in\cM(Q, m)\subseteq\cM(P,m)$ implies $Q\prec_c P$.
\end{proof}

Let now $\cS_{\cM(P,m)}: \R^{n\times m}\to\R$ be the support function of the Markotop of $P$  \eqref{markotop} defined by
\be
\cS_{\cM(P,m)}(A):=\max\left\{\tr(A^TQ)\;:\;Q\in\cM(P,m)\right\}
\ee
for any $A\in\R^{n\times m}$. Support functions of non-empty compact convex sets have the following property:
\be
\cM(Q,m)\subseteq\cM(P,m)\;\;\iff\;\;\cS_{\cM(Q,m)}\leq\cS_{\cM(P,m)}.
\ee
From Lemma \ref{lem8}, the support function provides a characterization of conditional majorization. We therefore
calculate
\be
\tr(A^TQ)\!=\!\sum_{k=1}^{n}\sum_{r=1}^{m}\!a_{kr}q_{kr}\!\!
=\!\!\sum_{r=1}^{m}\sum_{r'=1}^{\ell}t_{r'r}\!\sum_{x,k=1}^{n}D^{(r)}_{kx}a_{kr}p_{xr'}.
\ee
Since the set of doubly stochastic matrices is the convex hull of the permutation matrices we get
\be
\max_{D^{(r)}}\!\!\sum_{x,k=1}^{n}\!\!\!D^{(r)}_{kx}a_{kr}p_{xr'}\!=\!\max_{\pi}\!\sum_{x=1}^{n}\!\!a_{\pi(x)r}p_{xr'}\!
=\max_{\Pi}(\Pi\a_r)^T\p_{r'},
\ee
where the second maximum is over all permutations $\pi$, and the last maximum is over all $n\times n$ permutation matrices $\Pi$. Therefore,
\ben
\nonumber
\cS_{\cM(P,m)}(A)\!\!\! &= & \!\!\!\max_{\cD,T}\tr(A^T\cD[PT])\\
\nonumber
&=&\!\!\!\!\sum_{r'=1}^{\ell}\max_{r\leq m}\max_{\Pi}(\Pi\a_r)^T\!\p_{r'}\!=\!\sum_{r'=1}^{\ell}p_{r'}\Phi_A(\p^{|r'}),\\
\een
where
\be\label{hh}
\Phi_A(\p^{|r'})\equiv \max_{r\leq m}\max_{\Pi}(\Pi\a_r)^T\p^{|r'}=\max_{r\leq m}\;(\a_{r}^{\downarrow})^T\left(\p^{|r'}\right)^{\downarrow}.
\ee
Note that $\Phi_A$ is a convex, positively homogeneous (i.e., $\lambda\Phi_A(\p^{|r'})=\Phi_A(\lambda\p^{|r'})$ for $\lambda\geq 0$), and symmetric (under permutations of $\p^{|r'}$) function.

It remains to show that $(3) \Longleftrightarrow (1)$.
Indeed, suppose $Q\prec_{c} P$. Then, there exist family $\cD$ of $m$ $n\times n$ doubly stochastic matrices and an $\ell\times m$ stochastic matrix $T\equiv[\bt_1,...,\bt_m]$ such that
\be
Q=\cD[PT]=[D^{(1)}P\bt_1,...,D^{(m)}P\bt_m].
\ee
Denoting by $q_r\equiv\sum_{x=1}^{n}q_{xr}$ we therefore get that
\be
\label{eq45}
\q^{|r}=\frac{1}{q_r}D^{(r)}P\bt_r=\sum_{r'=1}^{\ell}\frac{t_{r'r}p_{r'}}{q_r}D^{(r)}\p^{|r'}.
\ee
Note that the RHS of Eq. \eqref{eq45} is a convex combination of $D^{(r)}\p^{|r'}$ since $q_r=\sum_{r'=1}^{\ell}t_{r'r}p_{r'}$. We therefore have
\ben
\nonumber
\sum_{r=1}^{m}q_{r}\Phi(\q^{|r})&=&\sum_{r=1}^{m}q_{r}\Phi\left(\sum_{r'=1}^{\ell}\frac{t_{r'r}p_{r'}}{q_r}D^{(r)}\p^{|r'}\right)\\
\nonumber
&\leq & \sum_{r=1}^{m}\sum_{r'=1}^{\ell}t_{r'r}p_{r'}\Phi\left(D^{(r)}\p^{|r'}\right)\\
\nonumber
&\leq & \sum_{r=1}^{m}\sum_{r'=1}^{\ell}t_{r'r}p_{r'}\Phi\left(\p^{|r'}\right)\\
&=&\sum_{r'=1}^{\ell}p_{r'}\Phi\left(\p^{|r'}\right),
\een
where the first inequality follows from the convexity of $\Phi$ and the second from its Schur convexity.\\
\end{proof}

Remembering that $P\in\R^{n\times\ell}_{+}$ and $Q\in\R^{n\times m}_{+}$, we now look at a few special cases of the matrix dimensions $n$, $\ell$ and $m$, wherein conditional majorization simplifies instantly to ordinary majorization:
\begin{enumerate}
\item $n=1$, with $\ell$ and $m$ arbitrary: In this case $P^{\downarrow}=Q^{\downarrow}$ always holds, and therefore $P\sim_cQ$ for any $(P,Q)$.
\item $m=1$, with $n$ and $\ell$ arbitrary: Here $Q=\q$ is a one-column matrix equivalent to a probability vector $\q$. Moreover, $T$ in Lemma~\ref{lem3} also have only one column, and since it is row-stochastic, $T=\e=(1,...,1)^T$. Denoting $\p\equiv P\e$, we have
\be
P\succ_cQ\;\;\iff\;\;\p\succ\q.
\ee
\item $\ell=1$, with $n$ and $m$ arbitrary: In this case $P=\p$ is a probability vector, and $T=(t_1,...,t_m)$ is a probability row vector.
We therefore get
\be\label{ell1}
P\succ_cQ\;\;\iff\;\;\p\succ\q^{|r}\;\;\forall\;r=1,...,m.
\ee
\item $\ell=2$, with $n$ and $m$ arbitrary. This case is more involved than the previous special cases. Details are presented below.
\end{enumerate}
Let us assume that $Q=Q^{\downarrow}$ and $P=P^{\downarrow}$ are 2-row joint probability matrices. For $\ell=2$, $Q\prec_{c}P$ if and only if there exists two $m$-dimensional probability vectors $\a$ and $\b$ such that $LQ\leq LPT$ \eqref{mqpt}, where 
\be\label{twot}
T=\begin{pmatrix}
a_1 & \cdots & a_m\\
b_1 & \cdots & b_m
\end{pmatrix}.
\ee
We denote by $q_r\equiv\sum_{x=1}^{n}q_{xr}$ for $r=1,...,m$, and similarly by $p_{r'}\equiv\sum_{x=1}^{n}p_{xr'}$ for $r'=1,2$. Since $p_1+p_2=1$ we simplify the notation and denote $p_1\equiv p$ and $p_2\equiv 1-p$. With these notations the Eq. \eqref{mqpt} is equivalent to 
\be\label{in1}
q_rL\q^{|r}\leq a_rpL\p^{|1}+b_r(1-p)L\p^{|2}\quad\forall\;r=1,...,m.
\ee
Note that the last component in the vector inequality above reads
\be
q_r\leq a_rp+b_r(1-p)\quad\forall\;r=1,...,m.
\ee
This is possible only if 
\be
q_r=a_rp+b_r(1-p)
\ee
for all $r=1,...,m$. Substituting this into~\eqref{in1} we  conclude that $Q\prec_{c}P$ if and only if there exists a probability vector $\a$ satisfying the following conditions:\\
\begin{align}
\label{cond_a}
& a_r\leq\frac{q_r}{p},\\
\label{cond_az}
& a_rL\left(\p^{|1}-\p^{|2}\right)\geq\frac{q_r}{p}L\left(\q^{|r}-\p^{|2}\right).
\end{align}

\begin{theorem}
\label{theoremW}
Let $P$ and $Q$ be $n\times 2$ and $n\times m$ probability matrices given in their standard form. Define
\be
\label{mi_nim}
\mu_k\equiv\sum_{x=1}^{k}\left(p_{x|1}-p_{x|2}\right)\;\;\text{and }\;\nu_{k}^{(r)}\equiv\sum_{x=1}^{k}\left(q_{x|r}-p_{x|2}\right).
\ee
Denote by $\cI^{+}$, $\cI^{0}$, and $\cI^{-}$ the set of indices $\{k\}$ for which
$\mu_k$ is positive, zero, and negative, respectively. Furthermore, define $p\equiv p_1=\sum_xp_{x1}$, so that $1-p=p_2$. Also define:
\be\label{alphabet}
\alpha_r\!\equiv\!\!\frac{q_r}{p}\max\left\{\!0,\max_{k\in\cI^{+}}\frac{\nu_{k}^{(r)}}{\mu_k}\!\!\right\}; \beta_r\!\equiv\!\!\frac{q_r}{p}\min\left\{\!1,\min_{k\in\cI^{-}}\frac{\nu_{k}^{(r)}}{\mu_k}\!\!\right\},
\ee
and through these,
\begin{align}
\label{cond_w1}
& W_0(P,Q)\equiv\!-\!\max_{r;k\in\cI^{0}}\{\nu_{k}^{(r)}\};\;\;W_{-}(P,Q)\equiv\left(\sum_{r=1}^{m}\beta_r\right)\!\!-\!\!1;\;\\
& W_{+}(P,Q)\equiv 1\!\!-\!\!\sum_{r=1}^{m}\alpha_r;\;\;W_1(P,Q)\equiv\min_{r\in\{1,...,m\}}\!\left(\beta_r\!-\!\alpha_r\right)\!.\label{cond_w2}
\end{align}
Then, $Q\prec_{c}P$ if and only if $W_0$, $W_1$, $W_+$, and $W_-$ are all non-negative.
\end{theorem}
\begin{proof}
To prove the theorem we need to show the equivalence between conditions \eqref{cond_a}-\eqref{cond_az} $\Leftrightarrow$ \eqref{cond_w1}-\eqref{cond_w2}. To prove the implication "$\Rightarrow$" we assume~\eqref{cond_a}-\eqref{cond_az} and express the condition~\eqref{cond_az} using notation introduced in~\eqref{mi_nim} in the following form
\be\label{inequal_az1}
a_r\mu_k\geq\frac{q_r}{p}\nu_{k}^{(r)}\quad\forall\;k, \forall r. 
\ee
Next, we consider $4$ cases.
\begin{enumerate}[(i)]
\item $k\in\cI^{0}$, i.e., $\mu_{k}=0$.

Then for all $r=1,...,m$,
\be
\frac{q_r}{p}\nu_{k}^{(r)}\leq 0 \;\forall\;(k\in\cI^{0}) \Rightarrow \frac{q_r}{p}\max_{k\in\cI^{0}}\nu_{k}^{(r)}\leq 0.
\ee
Since $\frac{q_r}{p}$ is positive for all $r$, we get that $W_0(P,Q)\equiv-\max_{k\in\cI^{0}}\{\nu_{k}^{(r)}\}$ is non-negative.
\item $k\in\cI^{+}$, i.e., $\mu_k>0$.

Then for all $r=1,...,m$
\be
a_r-\frac{q_r}{p}\frac{\nu_{k}^{(r)}}{\mu_k}\geq 0\quad\forall k\in\cI^{+}.
\ee
But since $a_r\ge0$ must also hold, we have
\be\label{inequal_az2}
a_r-\alpha_r\ge0.
\ee
Summing over $r$ and requiring $\sum_{r=1}^{m}a_r=1$, we see that $W_{+}\equiv 1-\sum_{r=1}^{m}\alpha_r$ must be non-negative.
\item $k\in\cI^{-}, i.e., \mu_k<0$.

Then for all $r=1,...,m$
\be
\frac{q_r}{p}\frac{\nu_{k}^{(r)}}{\mu_k}-a_r\geq 0\quad\forall k\in\cI^{-}.
\ee
But since $b_r\ge0$, \eqref{cond_a} implies that $a_r\le q_r/p$ must hold. Therefore,
\be\label{inequal_az3}
\beta_r-a_r\ge0.
\ee
Again, summing over $r$ and requiring $\sum_{r=1}^{m}a_r=1$ entails that $W_{-}\equiv\sum_{r=1}^{m}\beta_r-1$ be non-negative.
\item Combining~\eqref{inequal_az2} with~\eqref{inequal_az3} for all $r$, we obtain the condition $W_1\equiv\min_{r\in\{1,...,m\}}\left(\beta_r-\alpha_r\right)\ge0$.
\end{enumerate}
That completes the first part of the proof. 

To prove the other direction, define the matrices
\begin{align}
T_{\min}&\equiv
\begin{pmatrix}
\alpha_1&\alpha_2&\cdots&\alpha_m\\
\frac{q_1-p\alpha_1}{1-p}&\frac{q_2-p\alpha_2}{1-p}&\cdots&\frac{q_m-p\alpha_m}{1-p}
\end{pmatrix};\nonumber\\
T_{\max}&\equiv
\begin{pmatrix}
\beta_1&\beta_2&\cdots&\beta_m\\
\frac{q_1-p\beta_1}{1-p}&\frac{q_2-p\beta_2}{1-p}&\cdots&\frac{q_m-p\beta_m}{1-p}
\end{pmatrix}.
\end{align}
These matrices satisfy all the properties desired of the matrix $T$ in \eqref{twot}, except possibly row stochasticity, which they might fail by virtue of the rows not adding up to 1 (note that they don't fail in the nonnegativity condition on the entries). Now, the two row sums are continuous functions of the matrix elements. Moreover, if the first row adds up to 1, so does the second row. By the premise, $\sum_r\alpha_r=1-W_+\le1$ and $\sum_r\beta_r=W_-+1\ge1$, and $\beta_r\ge\alpha_r$ for all $r$. The intermediate value theorem ensures the existence of a row-stochastic $T$ with the desired properties.
\end{proof}

The conditions in the Theorem \ref{theoremW} can be simplified in special cases:
\begin{corollary}\label{gorol}
If  $\p^{|2}\prec\p^{|1}$ and also $\q^{|r}\prec \p^{|1}$ for all $r=1,...,m$ then $Q\prec_{c}P$ if and only if
$W_+(P,Q)\geq 0$. Furthermore, $W_+(P,Q)\geq 0$ if and only if
\be
\label{cor11a}
p\geq\sum_{r=1}^{m}q_r\max\{0,h_r\},\;\;\textrm{where }h_r\equiv\max_{k\in\cI^{+}}\left\{\frac{\nu_{k}^{(r)}}{\mu_k}\right\}.
\ee
Moreover, if $\p^{|1}=\e_1\equiv (1,0,...,0)^T$, then
\be
\label{cor11b}
h_r=\max_{k}\left\{\frac{\sum_{x=1}^{k}q_{x|r}-\pi_k}{1-\pi_k}\right\},\;\;\;\;\textrm{with}\;\;
\pi_k\equiv \sum_{x=1}^{k}p_{x|2}.
\ee
\end{corollary}
\begin{proof}
If $\p^{|2}\prec\p^{|1}$ then $\cI^{-}=\emptyset$. If in addition $\q^{|r}\prec \p^{|1}$ then $\mu_k\geq\nu_{k}^{(r)}$. In this case, we always have $W_0\geq 0$ and $W_1\geq 0$. Then, from Theorem \ref{theoremW}, $Q\prec_{c}P$ if and only if $W_+(P,Q)\geq 0$. Eq. \eqref{cor11a} and \eqref{cor11b} follow from direct calculations, with $\sum_{x=1}^{k}p_{x|1}=1$.
\end{proof}
Using this corollary, for any given arbitrary $n$-dimensional distribution $\bs\omega$, we can construct a ``minimal $P$" (that we call $\Omega$) that conditionally majorizes a given $Q$, with the requirement that $\bs\omega$ be one of the conditionals of this $\Omega$:
\begin{corollary}\label{mh1}
For an arbitrary $n$-dimensional distribution $\bs\omega=(\omega_1,\omega_2,...,\omega_n,0,...,0)^T$, define
$$R_{\bs\omega}\equiv\{r:\q^{|r}\prec_c\bs\omega\}.$$
If
\be
\sum_{r\in R_{\bs\omega}}q_r \geq 1-\alpha,
\label{eq:probz}
\ee
then $Q \prec_c \Omega$, where
\be
\Omega = 
\left[\begin{array}{ccccc} 
\alpha & 0 &  & \ldots& 0\\
(1-\alpha) \omega_1 & (1-\alpha) \omega_2  & \ldots & (1-\alpha) \omega_n 
\end{array}
\right]^T.
\ee
\end{corollary}
In the next section we construct a universal CUR by providing a bound in such a form as in the Corollary \ref{mh1}.

\subsection{Conditional majorization vs. ordinary majorization}\label{subsec1G}
In this subsection, we show the relation between the notion of conditional majorization, introduced in our work, in respect to the notion of majorization operation, developed in the literature \cite{Dahl1999}. In the case where the reference system is nontrivial, the two notions are completely unrelated, meaning that none is implied by the other. However, if we consider the reference system to be null, then conditional majorization simplifies to ordinary majorization (see points 1--4 in the subsection \ref{subsec1F}).

Below, we show two examples that illustrate the independence of conditional and ordinary majorization in a case of a bipartite scenario, where the nontrivial reference system belongs to Bob.

\textit{Example 1.} Let us compare two different states of Alice and Bob:
\ben
\label{state1}
\sigma &=& |0\>\<0|_A \ot \frac1n I_B,\\
\label{state2}
\tilde{\sigma} &=& \left( \frac{1}{10} |0\>\<0|+  \frac{9}{10} |1\>\<1| \right) _A \ot |0\>\<0|_B,
\een
using both conditional and ordinary majorization. For those two states, the conditional majorization implies that the state $\tilde{\sigma}$ \eqref{state2} is more conditionally uncertain than the state $\sigma$ \eqref{state1}, and the conditional majorization relation reads as $\tilde{\sigma} \prec_c \sigma$. This is due to the fact that having deterministic probability distribution $|0\>\<0| \sim (1,0,...,0)^T$, Alice can always create any other probability distribution, in particular: $\frac{1}{10} |0\>\<0|+  \frac{9}{10} |1\>\<1| \sim ( \frac{9}{10}, \frac{1}{10}, 0, ...0)^T$. However, the ordinary majorization implies the opposite, i.e., $\sigma \prec \tilde{\sigma}$, since  $\sigma \sim (\frac12, \frac12, 0,...,0)^T$ is more random (uncertain) than $(\frac{9}{10}, \frac{1}{10}, 0, ...0)^T$.

\textit{Example 2.} Now, let us compare another two states of a bipartite system, where the memory is held by Bob:
\ben
\label{state3}
\gamma &=&\left( \frac{1}{10} |0\>\<0|+  \frac{9}{10} |1\>\<1|\right)_A \ot \frac1n I_B,\\
\label{state4}
\tilde{\gamma} &=& |0\>\<0|_A \ot |0\>\<0|_B.
\een
Following the same arguments as in the Example 1, we obtain that in this case both ordinary and conditional majorization imply the same, i.e., that the state $\gamma$ \eqref{state3} is more uncertain and conditionally uncertain than  $\tilde{\gamma}$, where the respective majorization relations read as: $\gamma \prec \tilde{\gamma}$ and $\gamma \prec_{c} \tilde{\gamma}$.  

Based, on the above two examples, one sees that conditional majorization and ordinary majorization should be treated as separate independent notions, if the reference system is nontrivial.

\section{CUR derivations}\label{sec2}

\subsection{Derivation of a monotone-based CUR}\label{subsec2A}
In the derivation of the monotone-based CUR, we use the majorization UR stated in the main text in Eq. (2):
\be
\label{SMmajUR}
\q^{A_1} \ot \q^{A_2} \prec \bs\omega,
\ee
where $\q^{A_1}$ ($\q^{A_2}$) denotes the probability distribution over the outcomes of measurement of an observable $A_1$ ($A_2$) applied to an arbitrary quantum state $\rho$.  The bound $\bs\omega$ is given in the form:
\be
\label{omega}
\bs\omega =(\eta_1,\eta_2-\eta_1,...,\eta_{n}-\eta_{n-1},0,...0)^T,
\ee
where
\be
\label{eta_l}
\eta_l=\max_{\rho}\max_{\cI_l} \sum_{(a_1,a_2) \in \cI_l} q_{a_1} q_{a_2}.
\ee
In \eqref{eta_l}, $\cI_l \subset [n ] \times [n]$ denotes a subset of $l$ individual pair of indices $(a_1,a_2)$, where $[n]$ constitutes a set of natural numbers $1,...,n$. The quantity \eqref{eta_l} was derived explicitly in \cite{Friedland2013, Puchaa2013} for $l=1$: $\eta_1=\frac14(1+c)^2$ with $c=\max_{a_1,a_2} |\<a_1|a_2\>|$, and for $l=2$: $\frac14(1+c')^2$ with $c'=\max\sqrt{ |\<a_1|a_2\>|^2 +  |\<a'_1|a'_2\>|^2}$. For the remaining $\eta_l's$ for $l=\{3,...,n\}$ solely the upper bounds were provided.

Next, by applying to both sides of Eq. \eqref{SMmajUR} a convex function $\Phi$ that constitutes a proper measure of uncertainty, we obtain the UR: 
\be
\label{SMmaj_UR_Phi}
\Phi(\q^{A_1}\ot \q^{A_2}) \leq \gamma,
\ee
where $\gamma=\Phi(\bs\omega)$ depends on observables $A_1$ and $A_2$.

Let us now consider a bipartite scenario, where parties initially share two copies of a pure state $\Psi$, and we allow classical memory. Bob measures the same observable $R$ on his two copies, obtains result $r$ and prepares at Alice's site a state $\Psi_r$.  Alice performs measurement of observable $A_1$ on one of her copies, and on the other a measurement of $A_2$. She obtains the corresponding outcomes $a_1$ and $a_2$, which from now on we denote as a single variable $x \equiv (a_1,a_2)$. For each of the states $\Psi_r$, the UR \eqref{SMmaj_UR_Phi} still holds:
$\Phi(\q^{A_1}(\Psi_r)\ot \q^{A_2}(\Psi_r))=\Phi(\q^{A_1|r}\ot \q^{A_2|r}) \leq \gamma$. Next, by averaging over outcomes we get the corresponding CUR:
\be
\label{SMCURgeneral}
\sum_r q_r \Phi(\q^{A_1|r}\ot \q^{A_2|r}) \leq \gamma.
\ee
Let us rewrite the obtained CUR \eqref{SMCURgeneral} using a simplified notation:
\be
\label{SMmaj_CUR_Phi}
\sum_r q_r \Phi(\q^{|r}) \leq \gamma.
\ee
Now, we choose a particular monotone and derive a CUR based on it. Let us take a function $\Phi(\q^{|r})$ in Eq. \eqref{SMmaj_CUR_Phi} to be a function $\Phi_A(\q^{|r})\equiv \max_k \; (\a_{k})^{\downarrow} \bullet (\q^{|r})^{\downarrow}$ introduced in Eq. (7) in the main text with $A= (\underbrace{1,\ldots, 1}_{l},0,\ldots 0)^T$. Therefore, in the LHS of Eq. \eqref{SMmaj_CUR_Phi} the $l$-signed monotone is given by the function $\Phi_l(\q^{|r})=\sum_{x=1}^l q^{\downarrow}_{x|r}$. Applying the chosen monotone on $\gamma_l=\Phi_l(\bs\omega)$ with $\bs\omega$ defined in Eq. \eqref{omega} results in the monotone-based CUR in the final form: 
\be
\label{monotone_based_CUR}
\sum_{r} q_r \sum_{x=1}^l q^{\downarrow}_{x|r} \leq \eta_l
\ee
with $\eta_l$ provided in \cite{Friedland2013, Puchaa2013}: for $l=\{1,2\}$ explicitly, and for $l \geq 3$ upper bounded.

\subsection{Derivation of a universal CUR}\label{subsec2B}
Let $Q=[q_{xr}]$ be a matrix of joint probability distributions $q_{xr}=q_r q_{x|r}$ in the standard form. Therefore, the elements of each column $r$ are ordered as: $q_{x_1r} \geq q_{x_2r} \geq \cdots \geq q_{x_nr}$. Now, let us take $\sum_{x=1}^l q^{\downarrow}_{x|r}$ and average it over probability of columns $q_r$: $\sum_{r} q_r \sum_{x=1}^l q^{\downarrow}_{x|r}$. From the monotone-based CUR \eqref{monotone_based_CUR} derived in the previous subsection \ref{subsec2A}, we know that this expression is upper-bounded by $\eta_l$. Consider the monotone-based CUR \eqref{monotone_based_CUR} for the special case of $l=1$, where $\eta \equiv \eta_{l=1}$ is known explicitly:
\be
\label{monotone_based_CUR_l_1}
\sum_r q_r \max_x q_{x|r} \leq \eta
\ee
with 
\be
\label{eta}
\eta= \frac14 (1+c)^2.
\ee
It will be used in proving Theorem 2 from the main text, which provides a universal CUR: $Q \prec_c \Omega$, where $\Omega$ constitutes a bound we are looking for. 

To this end, we choose $\Omega$ in the form of a two-column matrix, where the first column is proportional to deterministic probability distribution and the second column is proportional to a vector:
\be
\label{vector_eta}
\bs\omega=(\underbrace{\beta, \beta, \beta}_{\ell}, 1 - \ell \beta, 0\ldots 0)^T,
\ee
which results in  the bound:
\begin{align}
\label{Omega}
    \Omega =
     \begin{pmatrix}
     \alpha
     \begin{bmatrix}
          1 \\
          0\\
           \\
           \\
           \vdots \\
          \\
          \\
           0
         \end{bmatrix}\quad (1- \alpha)
          \begin{bmatrix}
          \begin{rcases}
  \beta \\
  \vdots \\
  \beta
\end{rcases}
\ell\\         
           1 - \ell \beta\\
           0\\
           \vdots \\
           0
          \end{bmatrix}
    \end{pmatrix},
  \end{align}
where $\alpha$ is the parameter that makes the transformation $\Omega \mapsto Q$ achievable, cf. Corollary \ref{mh1}. The state $\Omega$ \eqref{Omega} can be equivalently written as:
\ben
\label{Omega_state}
\nonumber
\Omega &=&\alpha |0\> \<0|_{A} \ot |0\> \<0|_{B} \\
&+& (1-\alpha) |\omega\> \<\omega|^{A} \ot |1\> \<1|^{B}.
\een
Now it suffices to find an appropriate $\alpha$.

First, we state the elementary application of Markov's inequality in the form of the following lemma:
\begin{lemma}
\label{mh2}
Let $Q\in\cC\cC^n$. Suppose that 
\be
\label{markovassump}
\sum_r q_r \max_x q_{x|r}\leq \eta.
\ee
For arbitrary $\beta>0$, define 
\be
\label{Rbeta}
R_\beta\equiv\{r:\max\limits_x q_{x|r} \leq \beta\}.
\ee
Then,
\be
\label{markov}
\sum_{R_\beta}q_r \geq 1-\frac{\eta}{\beta}.
\ee
\end{lemma}
\begin{proof}
The proof follows directly from Markov's inequality.
\end{proof}

Let us substitute in Eq. \eqref{markov}: $\sum_{R_\beta}q_r \equiv 1-\alpha $, which immediately gives:
\be
\label{SMalphabeta}
\alpha \beta \leq \eta.
\ee
Then from Eqs.  \eqref{monotone_based_CUR_l_1} - \eqref{eta} we obtain the following constraint on the parameter $\alpha$:
\be
\label{SMalphabetac}
\alpha \beta \leq \frac14 (1+c)^2.
\ee

Now, we divide the set of parameters $r$ (numbering the columns of matrix $\Omega$) into two subsets: the subset of those $r \in R_{\beta}$ for which $\max_x q_{x|r} \leq \beta$, and the rest, $r \in \tilde{R}_{\beta}$ about which we know nothing. 

The procedure that allows the transformation $\Omega \mapsto Q$ is then the following. Bob looks at his register (see Eq. \eqref{Omega_state}). When he obtains  $|0\rangle$, he picks $r$ from the set $\tilde{R}_{\beta}$ with probabilities $q_r$ and tells Alice which $r$ he obtained. Then Alice creates $q_{x|r}$ out of $ |0\> \<0|_{A}$. This is always possible since from a deterministic probability distribution $\e \equiv (1, 0,0,...,0)^T$ she can produce any other probability distribution. On the other hand, when Bob obtains $|1\>$, he picks $r \in R_{\beta}$ with probabilities $q_r$, and informs Alice about $r$. In this case Alice creates $q_{x|r}$ out of $|\omega\> \<\omega|_{A}$. This is also always possible because for $r \in R_{\beta}$, $q_{x|r} \prec \omega$ by definition of $R_{\beta}$ \eqref{Rbeta}.

\section{Comparison with existing conditional uncertainty relations}\label{sec3}
\subsection{The universal CUR vs. entropic CUR}\label{subsec3A}
In this section we compare our universal CUR introduced in the main text in Theorem 2:
\begin{align}
\label{Suniversal_CUR}
    Q \prec_c \Omega \equiv
     \begin{pmatrix}
     \alpha
     \begin{bmatrix}
          1 \\
          0\\
           \\
           \\
           \vdots \\
          \\
          \\
           0
         \end{bmatrix}\quad (1- \alpha)
          \begin{bmatrix}
          \begin{rcases}
  \beta \\
  \vdots \\
  \beta
\end{rcases}
\ell\\         
           1 - \ell \beta\\
           0\\
           \vdots \\
           0
          \end{bmatrix}
    \end{pmatrix}
  \end{align}
with the entropic CUR -- a classical memory-assisted version of Maassen-Uffink UR given in Eq. (3) in the main text:
\be
\label{SRenes_clas}
H(A_1|\cR)+H(A_2|\cR)	\ge\log_2\frac1c.
\ee 
In \eqref{Suniversal_CUR}, $\ell$ denotes the largest integer such that $\beta \ell \leq 1$, and parameters $\alpha, \beta <1$ satisfy
\be
\label{alphabeta}
\alpha \beta = \frac14 (1 +c)^2,
\ee
where 
\be
\label{c}
\frac{1}{\sqrt{d}} \leq c \leq 1
\ee
is determined by the incompatibility of the measurements, and $d$ is the dimension of the Hilbert space. Using \eqref{alphabeta} and \eqref{c} with $\alpha, \beta < 1$ for $d \rightarrow \infty$, we obtain that
\be
\label{alpha_14}
\frac14 < \alpha < 1.
\ee

In Fig. 1 in the main text we described how to compare different CURs. In particular, CUR1 (here given by Eq. \eqref{Suniversal_CUR}) and CUR2 (given by Eq. \eqref{SRenes_clas}) are said to be independent whenever CUR1 excludes some statistics allowed by CUR2, but also allows some statistics excluded by CUR2. Let us now show that this is the case first by finding the statistics allowed by the universal CUR \eqref{Suniversal_CUR} but excluded by the entropic CUR \eqref{SRenes_clas}. 

Assume that Alice owns the system in the pure state, then, for the product states of Alice and Bob, our universal CUR simplifies to the ordinary majorization UR (as in Eq. (2) in the main text) and the conditional entropic relation \eqref{SRenes_clas} reduces to Maassen-Uffink UR (as in Eq. (1) in the main text). Then, based on the work of Friedland et. al. \cite{Friedland2013}, we know that there exist statistics that are excluded by the latter relation but satisfy the former. 

It remains to find a counterexample, namely such $Q$ that satisfies the entropic CUR, but for which our universal CUR is violated $Q \nprec_c \Omega$, in other words that there exists no protocol providing transformation $\Omega \mapsto Q$ defined in Eq. (4) in the main text. To this end, let us choose the following matrix Q:
\begin{align}
\label{Q_p}
    Q=
        \begin{pmatrix}
         p^2
          \begin{bmatrix}
          1 \\
          0\\
          0\\
          0\\
           \vdots \\
           0
         \end{bmatrix} \; 2p(1-p)
          \begin{bmatrix}
           \frac1d \\
           \vdots\\ 
           \frac1d \\ 
           0\\         
           \vdots \\
           0
          \end{bmatrix}\; (1-p)^2
          \begin{bmatrix}
           \frac{1}{d^2} \\ 
           \frac{1}{d^2} \\ 
            \frac{1}{d^2} \\       
            \frac{1}{d^2} \\     
           \vdots \\
            \frac{1}{d^2}
          \end{bmatrix}
    \end{pmatrix},
  \end{align}
whose elements are given in the form $q_{xr}=q_r q_{a_1|r} q_{a_2|r}$. By looking on the first columns $\q_1$ and $\om_1$ of matrices $Q$ \eqref{Q_p} and $\Omega$ \eqref{Suniversal_CUR}, respectively, one can easily see that there is impossible to realize transformation $\om_1 \mapsto \q_1$ (and so $\Omega \mapsto Q$) if
\be
\label{condition_violation}
\alpha <p^2.
\ee
\begin{remark}
We cannot take $p=\frac12$ since then we would need $\alpha < \frac14$, which would not obey Eq. \eqref{alpha_14}. Hence, we can take $p=0.51$, and $\alpha$ such that $\frac14 < \alpha < p^2$.
\end{remark}
Now, we notice that from the one side we would like to choose $c$ small since then we could take $\alpha$ so close to $\frac14$ for our universal CUR to be violated (violation is implied by \eqref{condition_violation}). On the other hand, we would also like to choose $c$ large enough for entropic CUR \eqref{SRenes_clas} to be satisfied. Let us then take:
\be
\label{c_d_gamma}
c=\frac{1}{d^\gamma}
\ee
for which $c \rightarrow 0$. This implies that $\alpha \rightarrow \frac14$. For such $c$, the entropic CUR \eqref{SRenes_clas} for the state \eqref{Q_p} reads:
$(1-p)^2 2 \log d + 2p(1-p) \log d \geq 2 \gamma \log d$. This requires $(1-p)^2 + p(1-p) \geq \gamma$, and therefore
\be
\gamma \leq 1-p.
\ee
Eventually, the entropic CUR \eqref{SRenes_clas} is satisfied for $\gamma \in [0,1-p]$ independently of $d$.\\

Next, let us analyze the violation of our universal CUR \eqref{Suniversal_CUR} determined by the condition \eqref{condition_violation}. This condition can be reformulated with the use of Eq. \eqref{c_d_gamma} and the fact that $\alpha < \frac14 (1+c)^2$ (derived for $\beta < 1$ from Eq. \eqref{alphabeta}) in the following form:
\be
\label{d_gamma_p}
\frac{1}{d^{\gamma}} < 2 p -1,
\ee
where $0 \leq \gamma \leq 1-p$. Then, for inequality \eqref{d_gamma_p} our universal CUR \eqref{Suniversal_CUR} is violated for some $p$ (e.g., $p=0.51$), since the LHS of \eqref{d_gamma_p} goes to zero for large $d$.\\

\begin{figure}[t]
  \centering
  \includegraphics[width=0.47\textwidth]{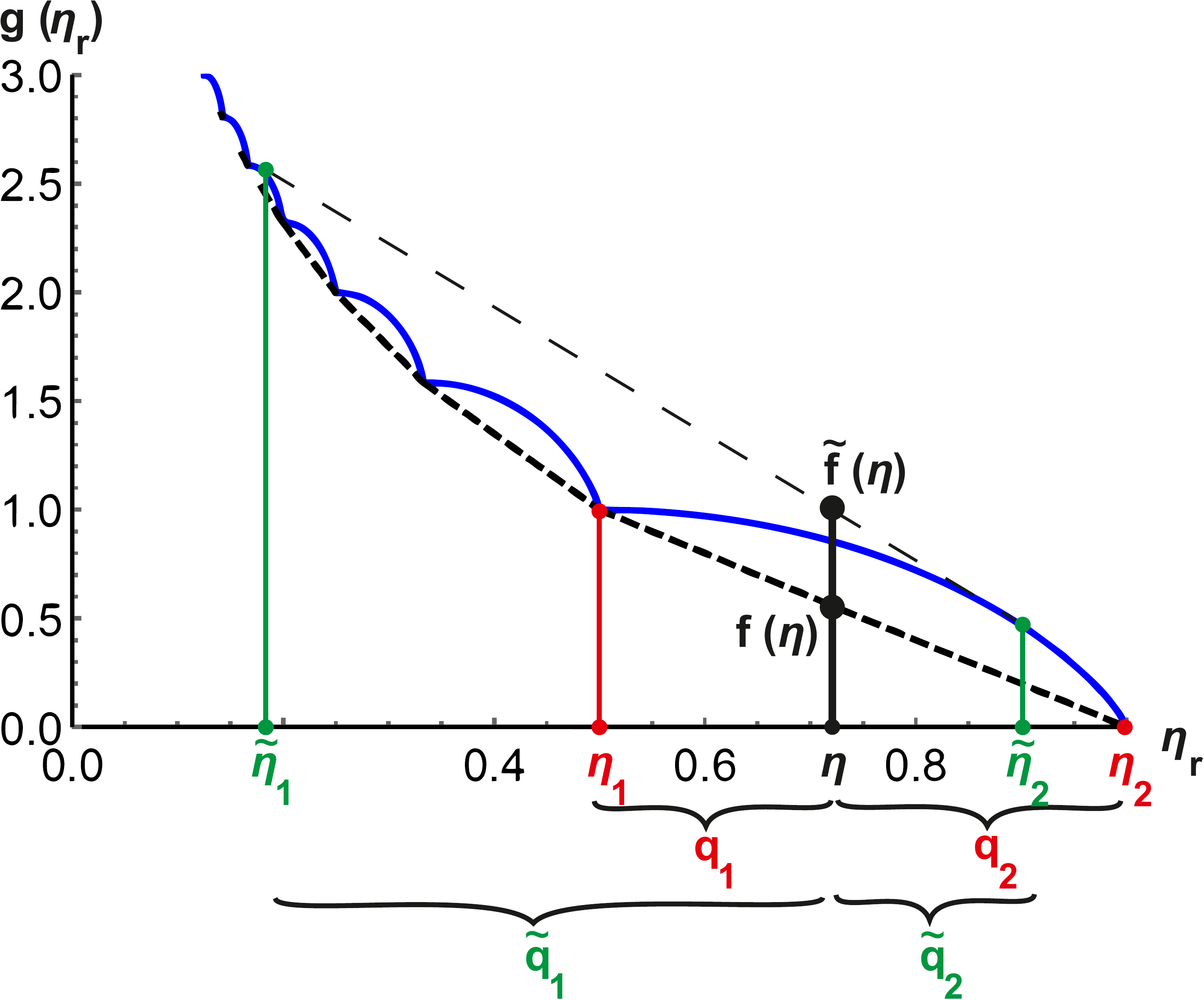}\\
  \caption{Graphical solution to the optimization problem \eqref{problem} under the constraint \eqref{constraint}. Blue curved polyline represents the function $g(\eta_r)$. Two exemplary convex combinations: $q_1 \eta_1 + q_2 \eta_2=\eta$ and $\tilde{q}_1 \tilde{\eta}_1 + \tilde{q}_2 \tilde{\eta}_2=\eta$ are highlighted by red and green, respectively, and let to different values of $f(\eta)$ and $\tilde{f}(\eta)$. The optimal solution of \eqref{problem} is given by black dashed polyline.}
  \label{function_eta}
\end{figure}

\subsection{Entropic CUR implied by the monotone-based CUR}\label{subsec3B}
Consider the monotone-based CUR for $l=1$ \eqref{monotone_based_CUR_l_1}, introduced in the main text as a special case of Eq. (13). In order to compare it with the entropic CUR \eqref{SRenes_clas}, we first need to find what entropic relation is implied by \eqref{monotone_based_CUR_l_1}. In particular, we construct a resulting entropic CUR given in a general form $H(X|\cR) \geq f(\eta)$ with $H(X|\cR)=\sum_r q_r H(\q^{X|r})$ and
\be
\label{f_eta_first}
f(\eta)=\min_{r} \sum_r q_r H(\q^{X|r}).
\ee

Let us then choose a probability vector that minimizes conditional entropy: 
\be
\label{vector_eta_prim}
\q^{X|r}=(\underbrace{\eta_r, \eta_r, \eta_r}_{\ell}, 1 - \ell \eta_r, 0\ldots 0)^T,
\ee
where
\be
\label{eta_r}
\eta_r=\max_x q_{x|r}
\ee
is the highest probability from the $r$-signed distribution for each $r$. Since \eqref{vector_eta_prim} is the probability vector in the standard form, we have that
\be
\label{eta_r_range}
\frac{1}{1+\ell} \leq \eta_{r} \leq \frac{1}{\ell}.
\ee

Eventually, the problem simplifies to finding a minimum of the following function:
\be
\label{problem}
f(\eta)= \min_r \sum_{r} q_{r} g(\eta_r),
\ee
under the constraint
\be
\label{constraint}
\min_r \sum_{r} q_{r} \eta_{r} = \eta,
\ee
where
\be
\label{eta_range}
\frac1m \leq \eta \leq 1
\ee
and $m$ denotes the number of (measurement) outcomes. In Eq. \eqref{problem} the function $g(\eta_r)$ reads as:
\be
\label{f_eta_second}
g(\eta_r)= H_b(\eta_{r} \ell) + \eta_{r} \ell \log \ell, 
\ee
and $H_b(p)=-p \log(p) -(1-p) \log(1-p)$ is a binary entropy. When stating the constraint \eqref{constraint}, we used \eqref{monotone_based_CUR_l_1} and \eqref{eta_r} with \eqref{eta_r_range} being satisfied. 
        
We solve the above problem graphically. Function $g(\eta_r)$ (see Eqs. \eqref{problem} and \eqref{f_eta_second}) is presented in Fig. \ref{function_eta} as a thick, blue line. To find the minimum \eqref{problem} under the constraint \eqref{constraint}, we need to find an optimal choice of free parameters $q_r$ and $\eta_r$. We illustrate it in Fig. \ref{function_eta}. Let us choose $\eta$ satisfying \eqref{eta_range}, and consider two exemplary convex combinations \eqref{constraint}: $q_1 \eta_1 + q_2 \eta_2=\eta$ (marked in red in Fig. \ref{function_eta}) and  $\tilde{q}_1 \tilde{\eta}_1 + \tilde{q}_2 \tilde{\eta}_2=\eta$ (marked in green). Note that $f(\eta) < \tilde{f}(\eta)$ and therefore the first choice is better. By considering any other possible convex combinations (with more elements) one can easily see that this is actually the optimal choice and that the dashed black polyline constitutes the desired minimum $f(\eta)$ \eqref{problem}. 

The last step is to represent the obtained result in terms of parameter $c$ according to Eq. \eqref{eta}. Eventually, we obtain the thick solid line from Fig. 2 presented in the main text.

\end{document}